\documentclass[12 pt]{article}
\usepackage{amsmath}
\usepackage{amssymb}
\usepackage[a4paper, total={6in, 9in}]{geometry}
\usepackage{amsthm,url,tikz}
\usepackage[utf8]{inputenc}
\usepackage[english]{babel}
\usepackage{graphicx}

\usepackage[pagewise]{lineno}

\usepackage{hyperref}
\usepackage{adjustbox}
\numberwithin{equation}{section}
\usepackage{makecell}
\usepackage{bm}
\usepackage{xcolor}
\newtheorem{theorem}{Theorem}[section]

\newtheorem{corollary}[theorem]{Corollary}
\newtheorem{lemma}[theorem]{Lemma}
\newtheorem{definition}[theorem]{Definition}

\newtheorem{example}[theorem]{Example}

\usepackage{enumerate}
\usepackage[nottoc,notlot,notlof]{tocbibind} 

\title{On  Function-Correcting Lee Metric Codes with Data Protection}
\author{Gyanendra K. Verma\footnote{Department of Electrical Communication Engineering,
 Indian Institute of Science, Bengaluru, Karnataka 560012, India. The author is financially supported by the ANRF-NPDF grant PDF/2025/002309. email: gkvermaiitdmaths@gmail.com}\ \  and 
 Abhay Kumar Singh\footnote{
 Department of Mathematics and Computing, Indian Institute of Technology (ISM) Dhanbad, Dhanbad.
 email: abhay@iitism.ac.in\ Corresponding author: Abhay Kumar Singh}}

\date{}

\begin{document}
	\maketitle
\begin{abstract} 
Function-correcting codes are designed to protect the function values of a prescribed function against errors.  Every error-correcting code that provides data protection inherently offers some degree of protection for functions defined on the data. In this work, we introduce a class of codes over $\mathbb{Z}_m$, termed function-correcting Lee metric codes with data protection (FCLMCs with data protection), which simultaneously provide error protection for both the data and the corresponding function values under the Lee metric. We consider codes that provide protection against a prescribed level of error for the function values that exceeds the level of protection guaranteed for the underlying data.  We present a general construction of these codes and derive lower and upper bounds on the optimal redundancy, including a Plotkin-type lower bound. Furthermore, we derive explicit upper bounds on the redundancy of FCLMCs with data protection for several important classes of functions, including locally binary Lee functions, the Lee weight function, and the modular sum function. Finally, since the Lee metric coincides with the Hamming metric over $\mathbb{Z}_2$, all of our results remain valid over $\mathbb{Z}_2$ with respect to the Hamming metric.




\end{abstract}
\textbf{Keywords}: Data protection, function-correcting codes, coding theory, Lee metric, error-correcting codes, optimal redundancy.\\ 
	\textbf{Mathematics subject classification}: 94B60, 94B65.\\

\section{Introduction}
A new coding framework, known as Function-Correcting Codes (FCCs), was introduced in \cite{Lenz2023} to provide reliable protection for a specific function or attribute of interest rather than for the entire information vector. In conventional error-correcting coding schemes, the primary objective is to recover the complete transmitted information vector accurately in the presence of errors. However, in many emerging communication and computing applications, the receiver may require only a particular function or attribute derived from the transmitted data, making complete recovery of the original information vector unnecessary.

A series of papers \cite{Ge2025,liu2025,liu2026functionRT,Ly2025,Premlal2024,Rajput2025, Singh2025,Verma2025fcc,verma2025FCLMC,verma2026functionHom,Xia2023} by various authors have investigated Function-Correcting Codes (FCCs) from different perspectives, considering a wide range of distance metrics and target functions. These studies have explored the design and analysis of FCCs for different classes of functions and have established several relevant theoretical bounds on their performance. In particular, bounds related to code parameters, error-correction capability, redundancy, and coding efficiency have been derived for various function-correcting coding schemes. A summary of the relevant research contributions, including the authors, considered functions, distance metrics, and established bounds, is presented in Table X. The table provides a comparative overview of the existing work on FCCs and highlights the different approaches adopted by researchers in analyzing function-oriented error correction.

Recently, C. Rajput et al. \cite{Rajput2025fcc-data} introduced a more general coding framework that extends the concept of Function-Correcting Codes by providing simultaneous protection for both the underlying data and their corresponding function values under the Hamming metric. Unlike conventional FCCs, which primarily focus on the reliable recovery of a specified function of the information vector, this generalized framework considers the joint preservation of the original data and the associated function. This approach provides greater flexibility in applications where both the transmitted data and the computed function are of interest at the decoder.

In this paper, we introduce a more general approach and framework for function-correcting codes with the Lee metric that incorporates data protection as well as protection of function values. A detailed description of Function-Correcting Codes (FCCs), along with a comprehensive review of the existing literature, is presented in Section 1. This section provides an overview of the fundamental concepts of FCCs, their underlying principles, and the various coding frameworks proposed in the literature. Section 2 presents the necessary preliminaries and fundamental concepts required for the development of the paper. It introduces the relevant definitions, notation, and basic results that will be used throughout the subsequent sections. Section 3 introduces the concept of Function-Correcting Codes (FCCs) with data protection under the Lee metric. Furthermore, this section establishes bounds on the redundancy of function-correcting codes with data protection under the Lee metric. These bounds characterize the amount of additional redundancy required to ensure reliable recovery of both the data and the desired function values. Section 4 deals with Function-Correcting Lee Metric Codes (FCLMCs) with data protection for several specific classes of functions, including locally bounded, Lee weight, and modular functions. The section develops the coding framework for these function classes and investigates the corresponding redundancy requirements under the Lee metric. Furthermore, explicit upper bounds on the redundancy are derived for several important classes of these functions. Section 5 concluded the paper.

\section{Preliminaries}\label{pre}
In this section, we present fundamental definitions, results, and notation used throughout the paper. We denote the residue ring modulo $m$ by $\mathbb{Z}_m$, where $m$ is a positive integer greater than or equal to $2$. We write the elements of $\mathbb{Z}_m$ as $0,1,\dots,m-1$ unless otherwise specified.  We denote the set of all $k$-tuples over $\mathbb{Z}_m$  for a positive integer $k$ by $\mathbb{Z}_m^k$. The Lee metric is a generalization of the Hamming weight, assigning different weights to distinct non-zero elements in $\mathbb{Z}_m$. 
\begin{definition}\cite[Lee Metric]{Lee1958}
Let $u=(u_1,u_2,\dots,u_k)$ and $v=(v_1,v_2,\dots,v_k)$  be two vectors in $\mathbb{Z}_{m}^k$. The Lee distance between, $u$ and $v$ is defined as 
$$d_L(u,v)=\sum_{i=1}^kw_L(u_i-v_i),$$
where $w_L(u_i-v_i)=\min\{u_i-v_i,v_i-u_i\}\pmod{m}$.
\end{definition}

Note that the elements of $\mathbb{Z}_m$ can also be written as 
\begin{align*}
    \begin{split}
        -\frac{m-1}{2},  -\frac{m-3}{2}, \dots,-1,0,1,\dots, \frac{m-1}{2} & \text{ if } m \text{ is odd},\\
        -\frac{m-2}{2},   \dots,-1,0,1,\dots, \frac{m-2}{2},\frac{m}{2} & \text{ if } m \text{ is even}.
    \end{split}
\end{align*}
In this representation, the Lee weight of an element $a\in \mathbb{Z}_m$ is equivalently defined as (see \cite{Chiang1971})
\begin{equation}\label{alternatelee}
    w_L(a)=|a|=\max \{a,-a\}.
\end{equation}

A $ k$-tuple message is denoted by vectors in $\mathbb{Z}_m^k$, called the message space. An encoding function $\mathcal{E}:\mathbb{Z}_m^k\to\mathbb{Z}_m^{k+r}$ is defined as $\mathcal{E}(u)=(u,p(u))$, where $p(u)\in\mathbb{Z}_m^r$ is called a redundancy vector.
The set $$C=\{\mathcal{E}(u)\in\mathbb{Z}_m^{k+r}\mid u\in \mathbb{Z}_m^k\}$$ is called a code of length $k+r$ over $\mathbb{Z}_m$ with redundancy $r$, and $\mathcal{E}(u)$ is called a codeword in $C$. The minimum Lee distance of $C$ is defined as $$d_L(C)=\min\{d_L(\mathcal{E}(u),\mathcal{E}(v))\mid u,v\in \mathbb{Z}_m^k, u\neq v\}.$$ The following lemmas are fundamental results in the Lee metric.

\begin{lemma}\cite{Wyner1968}\label{sumS}
 Let $\alpha\in \mathbb{Z}_m$. Then \begin{align*}
     S=\sum_{\beta\in\mathbb{Z}_m}d_L(\alpha,\beta)=\begin{cases}
         \frac{m^2}{4}, & \text{ if } m \text{ is even },\\
         \frac{m^2-1}{4}, & \text{ if } m \text{ is odd. }
     \end{cases}
 \end{align*}   
\end{lemma}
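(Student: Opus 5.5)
The plan is to reduce the sum to a sum of Lee weights and then count how often each weight occurs. The map $\beta\mapsto \beta-\alpha$ is a bijection of $\mathbb{Z}_m$, and $d_L(\alpha,\beta)=w_L(\beta-\alpha)$. Hence
\[
S=\sum_{\gamma\in\mathbb{Z}_m} w_L(\gamma).
\]
In particular, $S$ does not depend on $\alpha$, so it suffices to compute $\sum_{\gamma} w_L(\gamma)$.

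Next I use the symmetric representation of $\mathbb{Z}_m$ recalled above, together with (\ref{alternatelee}), which gives $w_L(a)=|a|$.

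\emph{Case $m$ odd.} The elements are $-\frac{m-1}{2},\dots,\frac{m-1}{2}$. Each value $j\in\{1,\dots,\frac{m-1}{2}\}$ occurs exactly twice as a weight, namely for $\pm j$. Therefore
\[
S=2\sum_{j=1}^{(m-1)/2} j=\frac{m-1}{2}\cdot\frac{m+1}{2}=\frac{m^2-1}{4}.
\]

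\emph{Case $m$ even.} The elements are $-\frac{m-2}{2},\dots,\frac{m-2}{2}$ together with $\frac m2$. Each $j\in\{1,\dots,\frac{m-2}{2}\}$ occurs twice, and $\frac m2$ occurs once. Therefore
\[
S=2\sum_{j=1}^{(m-2)/2} j+\frac m2=\frac{m-2}{2}\cdot\frac m2+\frac m2=\frac{m^2}{4}.
\]

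The only point needing care is the even case. There $\frac m2\equiv-\frac m2$, so this element must be counted once rather than twice. One should also check that the definition $w_L(a)=\min\{a,-a\}$, with representatives taken in $\{0,\dots,m-1\}$, agrees with $|a|$ in the symmetric representation, as the paper already states. Beyond that, the proof is routine arithmetic.
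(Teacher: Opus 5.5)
Your proof is correct. The paper gives no proof of this lemma and only cites Wyner; your argument is the standard direct computation: translation invariance reduces the sum to $\sum_{\gamma\in\mathbb{Z}_m} w_L(\gamma)$, you then count each weight in the symmetric representation, and you count $\frac{m}{2}$ once when $m$ is even.
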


\begin{lemma}\cite{Wyner1968}\label{Csum}
  Let $C=\{C_1,C_2,\dots,C_M\}\subseteq \mathbb{Z}_m^r$ be a code of length $r$ and size $M$. Then
  \begin{align*}
      \sum_{i,j}d_L(C_i,C_j)=\frac{SM^2r}{m}, \text{where $S$ is as in Lemma \ref{sumS}.}
  \end{align*}
\end{lemma}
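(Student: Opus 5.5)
The identity can only hold with equality under a balance condition. In general I would prove the inequality $\sum_{i,j}d_L(C_i,C_j)\le \frac{SM^2r}{m}$, with equality when every coordinate of $C$ takes each value of $\mathbb{Z}_m$ exactly $M/m$ times (for example when $C=\mathbb{Z}_m^r$ or $C$ is a linear code with no identically zero coordinate). This is the form in which Wyner's lemma is used for Plotkin-type bounds.

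First I split the double sum coordinatewise. Since $d_L$ is additive over coordinates,
$$\sum_{i,j}d_L(C_i,C_j)=\sum_{t=1}^r\sum_{i,j}w_L(C_{i,t}-C_{j,t}).$$
Fix a coordinate $t$ and let $n_a$ be the number of codewords whose $t$-th entry is $a$, so that $\sum_a n_a=M$. The inner sum becomes the quadratic form $Q(n)=\sum_{a,b\in\mathbb{Z}_m}n_an_b\,d_L(a,b)$. The statement therefore reduces to showing $Q(n)\le M^2S/m$, with equality at $n_a\equiv M/m$.

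The equality case follows directly from Lemma \ref{sumS}. If $n_a=M/m$ for all $a$, then
$$Q=\frac{M^2}{m^2}\sum_a\sum_b d_L(a,b)=\frac{M^2}{m^2}\cdot mS=\frac{M^2S}{m}.$$

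For the inequality, write $n=\frac{M}{m}\mathbf 1+x$ with $\sum_a x_a=0$. Let $D=(d_L(a,b))$ be the circulant distance matrix. By Lemma \ref{sumS}, $D\mathbf 1=S\mathbf 1$, so the cross term $\frac{2M}{m}\mathbf 1^{T}Dx=\frac{2MS}{m}\mathbf 1^{T}x$ vanishes. It remains to show $x^TDx\le 0$ whenever $\sum_a x_a=0$.

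This negativity step is the main obstacle. I would handle it by a cut decomposition of the cycle metric. For even $m$, take the $m$ "half-cycle" arcs $A_s=\{s,s+1,\dots,s+m/2-1\}$. Then $d_L(a,b)=\frac12\#\{s:\ A_s \text{ separates } a,b\}$. For odd $m$, use the $m$ arcs of length $(m-1)/2$ together with their complements; one checks that the same identity holds, with the appropriate normalisation, by counting the arcs that have exactly one endpoint between $a$ and $b$.

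Each cut contributes $\sum_{a,b}x_ax_b[\text{separated}]=2x(A_s)x(A_s^c)=-2x(A_s)^2\le 0$, which gives $x^TDx\le0$. A fallback route is to compute the eigenvalues $\lambda_j=\sum_a w_L(a)\omega^{ja}$ of $D$ explicitly and verify that $\lambda_j\le 0$ for $j\ne0$.

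Summing the per-coordinate bound over the $r$ coordinates gives the claim. Equality holds exactly when every coordinate is balanced, or more generally when $x(A_s)=0$ for all cuts.
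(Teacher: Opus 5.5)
You are right that the statement is false as an identity for arbitrary codes: for $M=1$, or for any code whose codewords all coincide, the left side is $0$ while $SM^2r/m>0$. The paper gives no proof of its own, only the citation to Wyner. Its one later use of the lemma, in the Plotkin-type bound, quotes exactly the inequality $\sum_{c\neq c'}d_L(c,c')\le Sm^{2k-1}(k+r)$ (Wyner's Eq.~(15)), which is the form you prove. So your reformulation is the correct one, and it is what the paper actually needs downstream.

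Your proof of that inequality is correct and self-contained, so the comparison is really between your argument and a bare citation. The reduction to a single coordinate and the centering step (using $D\mathbf{1}=S\mathbf{1}$ from Lemma~\ref{sumS}) are fine. The cut identities also hold:
\begin{itemize}
\item For even $m$, $A_s$ and $A_{s+m/2}$ define the same diametral cut, and each edge of the cycle is cut by exactly one diametral cut. A shortest path of $d\le m/2$ edges contains no antipodal pair of edges, so exactly $2d$ indices $s$ separate $a$ and $b$.
\item For odd $m$, each edge is a boundary edge of exactly two of the $m$ cuts $\{A_s,A_s^c\}$, and no cut meets a shortest path twice. So again $d_L(a,b)$ is half the number of separating cuts.
\end{itemize}

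Compared with your fallback eigenvalue computation, the identity $x^{T}Dx=-\sum_s x(A_s)^2$ avoids evaluating trigonometric sums and reads off the equality condition directly. For odd $m$ it forces every coordinate to be balanced, because the relations $x_s=x_{s+(m-1)/2}$ make $x$ constant. For even $m$ it only forces $n_s=n_{s+m/2}$, so for example $\{0,2\}\subseteq\mathbb{Z}_4$ also attains equality. For even $m$ you could even skip the centering, since $Q(n)=\sum_s n(A_s)\bigl(M-n(A_s)\bigr)\le mM^2/4=M^2S/m$ directly.

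There is one small slip in your opening remark. A linear code with no identically zero coordinate is balanced only if each coordinate projection is onto $\mathbb{Z}_m$. That is automatic for prime $m$ but not in general: for composite $m$ the projection can land in a proper subgroup. For example, $C=\{0,2,4\}\subseteq\mathbb{Z}_6$ gives $\sum_{i,j}d_L(C_i,C_j)=12<27/2=SM^2r/m$. This does not affect your inequality, only your list of equality examples.
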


A ball centered at a point $u\in \mathbb{Z}_m^k$ of radius $\rho$ in the Lee metric is defined as
$$B_L(u,\rho)=\{v\in \mathbb{Z}_m^k\mid d_L(u,v)\leq \rho\}.$$

\begin{definition}[Function-Correcting Lee Metric Code (FCLMC)]\label{def-fclmc}
  Let $t>0$ be an integer and $f:\mathbb{Z}_m^k\to \mathrm{Im}(f)$ be a function defined over message space $\mathbb{Z}_m^k$. An encoding function $\mathcal{E}:\mathbb{Z}_m^k\to\mathbb{Z}_m^{k+r}$  defines an $(f,t)$-function-correcting Lee metric code (in short, $(f,t)_L$-FCLMC) with redundancy $r$ if $d_L(\mathcal{E}(u),\mathcal{E}(v))\geq 2t+1$ holds for all $u,v\in \mathbb{Z}_m^k$ with $f(u)\neq f(v)$. 
\end{definition}

By the above definition, any $(f,t)_L$-FCLMC can be seen as a code that can correct up to $t$ function value errors.

\begin{definition}[Optimal Redundancy]
Let $t>0$ be an integer and $f:\mathbb{Z}_m^k\to \mathrm{Im}(f)$ be a function.  The smallest value $r$ for which there exists an encoding map $\mathcal{E}:\mathbb{Z}_m^k\to \mathbb{Z}_m^{k+r}$ that defines an $(f,t)_L$-FCLMC is called the optimal redundancy. It is denoted by $r_f^L(k,t)$.  
\end{definition}

For a given function $f:\mathbb{Z}_m^k\to \mathrm{Im}(f)$, we associate a matrix called the Lee distance requirement matrix and defined as follows.

\begin{definition}[Lee Distance Requirement Matrix (LDRM)]
  Let $u_1,u_2,\dots,u_M\in\mathbb{Z}_m^k$. Define the Lee distance requirement matrix $D_f^L(t,u_1,u_2,\dots,u_M)$ of order $M\times M$ in the Lee metric for a given function $f:\mathbb{Z}_m^k\to \mathrm{Im}(f)$ whose $ij$-th entry is given by:
  $$[D_f^L(t,u_1,u_2,\dots,u_M)]_{ij}=\begin{cases}
      [2t+1-d_L(u_i,u_j)]^{+} & \text{if } f(u_i)\neq f(u_j),\\
      0 & \text{ otherwise}
  \end{cases}
  $$
  for all $1\leq i,j\leq M$.
\end{definition}

 Irregular-distance code in the Lee metric is defined as follows. 
\begin{definition}[$D$-Code in Lee Metirc]
    Let $D$ be an $M\times M$ square matrix over non-negative integers with diagonal entries zero. A code $C=\{C_1,C_2,\dots,C_M\}\subseteq\mathbb{Z}_m^r$ is called a $D$-code in the Lee metric if $d_L(C_i,C_j)\geq [D]_{ij}$ for all $1\leq i,j\leq M$.
\end{definition}
Let $N_L(D)$ be the smallest integer $r$ such that there exists a $D$-code of length $r$. Next, we define Lee distance on the image set of a function $f:\mathbb{Z}_m^k\to \mathrm{Im}(f)$ as follows.

\begin{definition}\label{fncdistance}
    Let $f:\mathbb{Z}_m^k\to Im(f)$ be a function with $Im(f)=\{e_1,e_2,\dots,e_{\eta}\}$. Define Lee distance on $Im(f)$ as follows 
    $$d_L(e_i,e_j)=\min_{u_1,u_2\in\mathbb{Z}_m^k}\{d_L(u_1,u_2) |\  f(u_1)=e_i, f(u_2)=e_j\}.$$
\end{definition}
Analogous to LDRM, we define the function-distance matrix for a given function $f$ as follows. 
\begin{definition} [Function Lee Distance Matrix (FLDM)]
  Let $f:\mathbb{Z}_m^k\to Im(f)$ be a function with $Im(f)=\{e_1,e_2,\dots,e_{\eta}\}$. Define the function Lee distance matrix with $(ij)$-th entry
  $$[D_f^L(t,e_1,e_2,\dots,e_{\eta})]_{ij}=\begin{cases}
      [2t+1-d_L(e_i,e_j)]^+ & \text{ if } i\neq j,\\
      0 & \text{otherwise}.
  \end{cases}$$
\end{definition}

The subsequent results establish connections between the optimal redundancy of $(f,t)_L$-FCLMCs and the minimum length of a $D$-code, where $D$ is LDMR or FLDM. 

\begin{theorem}\cite[Theorem 3.6]{verma2025FCLMC}
Let $f:\mathbb{Z}_m^k\to Im(f)$ be a function and $\mathbb{Z}_m^k=\{u_1,u_2,\dots,u_{m^k}\}$.  Then 
$$r_f^L(k,t)=N_L(D_f^L(t,u_1,u_2,\dots,u_{m^k})).$$
\end{theorem}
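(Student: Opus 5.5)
We prove the equality by establishing both inequalities, using only the definitions and the additivity of the Lee distance over coordinates. Since an encoding has the systematic form $\mathcal{E}(u)=(u,p(u))$ with $p(u)\in\mathbb{Z}_m^r$, we have for all $u,v\in\mathbb{Z}_m^k$
\begin{equation*}
d_L(\mathcal{E}(u),\mathcal{E}(v))=d_L(u,v)+d_L(p(u),p(v)).
\end{equation*}
This identity follows immediately from the definition of $d_L$ as a sum of coordinatewise Lee weights. It reduces the question to whether the redundancy vectors $p(u_1),\dots,p(u_{m^k})$ form a $D$-code for $D=D_f^L(t,u_1,\dots,u_{m^k})$.

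\textbf{Step 1: $r_f^L(k,t)\ge N_L(D)$.} Let $\mathcal{E}$ define an $(f,t)_L$-FCLMC with redundancy $r$, and set $C_i=p(u_i)$. Consider $i,j$ with $f(u_i)\neq f(u_j)$. The FCLMC condition and the identity give $d_L(C_i,C_j)\ge 2t+1-d_L(u_i,u_j)$. Since Lee distances are nonnegative, this yields $d_L(C_i,C_j)\ge[2t+1-d_L(u_i,u_j)]^+=[D]_{ij}$. If instead $f(u_i)=f(u_j)$, then $[D]_{ij}=0$ and the bound holds trivially. Hence $\{C_1,\dots,C_{m^k}\}$ is a $D$-code of length $r$. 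Here the $C_i$ are indexed by $i$ and need not be distinct, which is harmless since distinctness is only forced where $[D]_{ij}>0$. Taking $r$ minimal gives $N_L(D)\le r_f^L(k,t)$.

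\textbf{Step 2: $r_f^L(k,t)\le N_L(D)$.} Let $\{C_1,\dots,C_{m^k}\}\subseteq\mathbb{Z}_m^{r}$ be a $D$-code with $r=N_L(D)$. Define $p(u_i)=C_i$ and $\mathcal{E}(u_i)=(u_i,C_i)$. Take $u_i,u_j$ with $f(u_i)\neq f(u_j)$. Then
\begin{equation*}
d_L(\mathcal{E}(u_i),\mathcal{E}(u_j))\ge d_L(u_i,u_j)+[2t+1-d_L(u_i,u_j)]^+\ge 2t+1.
\end{equation*}
So $\mathcal{E}$ is an $(f,t)_L$-FCLMC with redundancy $N_L(D)$.

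Combining the two steps gives the equality. No step is a genuine obstacle. The only points requiring care are the positive-part handling in Step 1, where nonnegativity of $d_L$ lets us replace the bound by its positive part, and the observation that a $D$-code is allowed to have repeated codewords.
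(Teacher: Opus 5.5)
Your proof is correct and follows essentially the same route as the paper. The paper only cites this result, but its proof of the data-protection analogue (Theorem~\ref{r=N}) has the same two directions: a $D$-code of length $N_L(D)$ appended as redundancy gives the upper bound, and by additivity of $d_L$ the redundancy vectors of any FCLMC form a $D$-code; the paper phrases the latter as a contradiction from $r<N_L(D)$ where you argue directly, and your explicit handling of the positive part and of repeated codewords makes it cleaner.
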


\begin{corollary}\cite[Corollary 3.7]{verma2025FCLMC}
 Let    $\{u_1,u_2,\dots u_M\}\subseteq\mathbb{Z}_m^k$ with $M\leq m^k$. Then the optimal redundancy of $(f,t)$-FCLMCs satisfies 
 $$r_f^L(k,t)\geq N_L(D_f^L(t,u_1,u_2,\dots,u_{M}))$$
 for any function $f:\mathbb{Z}_m^k\to Im(f)$.
\end{corollary}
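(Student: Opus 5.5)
The corollary follows from the preceding theorem (Theorem 3.6 of \cite{verma2025FCLMC}) together with a monotonicity property of $N_L$ under restriction to a principal submatrix. The idea is to take an optimal code for the full message space and simply keep the redundancy vectors of the chosen messages.

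First, we extend $\{u_1,\dots,u_M\}$ to an enumeration $\mathbb{Z}_m^k=\{u_1,\dots,u_M,u_{M+1},\dots,u_{m^k}\}$. The ordering is irrelevant: permuting the messages only permutes the rows and columns of the LDRM simultaneously, so $N_L$ is unchanged. By the theorem, $r_f^L(k,t)=N_L(D)$, where $D=D_f^L(t,u_1,\dots,u_{m^k})$.

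Second, let $D'=D_f^L(t,u_1,\dots,u_M)$. By the definition of the LDRM, each entry of $D'$ depends only on the pair $(u_i,u_j)$, so $D'$ is exactly the leading $M\times M$ principal submatrix of $D$.

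Third, we show that $N_L(D')\le N_L(D)$. Let $r=N_L(D)$ and take a $D$-code $\{C_1,\dots,C_{m^k}\}\subseteq\mathbb{Z}_m^r$. Its subfamily $\{C_1,\dots,C_M\}$ satisfies
$$d_L(C_i,C_j)\ge [D]_{ij}=[D']_{ij}\quad\text{for all } 1\le i,j\le M,$$
so it is a $D'$-code of length $r$.

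One technicality is that the $C_i$ are indexed codewords and need not be distinct. This causes no difficulty: the definition of a $D$-code only requires the distance inequalities, and whenever $[D']_{ij}=0$ the inequality holds trivially.

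Since $N_L(D')$ is the smallest length of a $D'$-code, we conclude
$$N_L(D')\le r=N_L(D)=r_f^L(k,t),$$
which is the claimed bound. There is no real obstacle beyond checking the submatrix identification and the permutation invariance in the first step.
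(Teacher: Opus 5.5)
Your proposal is correct and matches the intended argument. The paper only cites this corollary as an immediate consequence of the preceding theorem, and the reason it follows is exactly your observation: $D_f^L(t,u_1,\dots,u_M)$ is a principal submatrix of the full LDRM, so restricting a $D$-code (equivalently, the redundancy vectors $p(u_1),\dots,p(u_M)$ of an optimal FCLMC) gives a $D_f^L(t,u_1,\dots,u_M)$-code of the same length. Two small simplifications: only the easy direction $r_f^L(k,t)\geq N_L(D_f^L(t,u_1,\dots,u_{m^k}))$ of the theorem is needed, and since the theorem holds for every enumeration of $\mathbb{Z}_m^k$, the permutation-invariance remark is superfluous.
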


\begin{corollary}\cite[Corollary 3.11]{verma2025FCLMC}
Let $f:\mathbb{Z}_m^k\to Im(f)$ with $Im(f)=\{e_1,e_2,\dots,e_{\eta}\}$ and a set $u_1,u_2,\dots,u_{\eta}$ such that $f(u_i)=e_i$. If $D_f^L(t,u_1,u_2,\dots,u_{\eta})=D_f^L(t,e_1,e_2,\dots,e_{\eta})$, then 
$$r_f^L(k,t)=N_L(D_f^L(t,e_1,e_2,\dots,e_{\eta})).$$
\end{corollary}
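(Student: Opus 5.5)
The statement to prove is Corollary 3.11 of \cite{verma2025FCLMC}. Let $u_1,\dots,u_\eta$ be representatives with $f(u_i)=e_i$ and $D_f^L(t,u_1,\dots,u_\eta)=D_f^L(t,e_1,\dots,e_\eta)$. The plan is to prove two inequalities separately. The lower bound $r_f^L(k,t)\ge N_L(D_f^L(t,e_1,\dots,e_\eta))$ is immediate: apply Corollary 3.7 to the subset $\{u_1,\dots,u_\eta\}$ to get $r_f^L(k,t)\ge N_L(D_f^L(t,u_1,\dots,u_\eta))$, then use the hypothesis that this matrix equals the FLDM.

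For the upper bound, let $r=N_L(D_f^L(t,e_1,\dots,e_\eta))$ and take a $D$-code $\{C_1,\dots,C_\eta\}\subseteq\mathbb{Z}_m^r$ for the FLDM. Define the encoding $\mathcal{E}(u)=(u,C_i)$ whenever $f(u)=e_i$; the redundancy vector thus depends only on the function value. Now take $u,v$ with $f(u)=e_i\neq e_j=f(v)$. By the additivity of the Lee distance over coordinates,
$$d_L(\mathcal{E}(u),\mathcal{E}(v))=d_L(u,v)+d_L(C_i,C_j).$$
By Definition \ref{fncdistance}, $d_L(u,v)\ge d_L(e_i,e_j)$. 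By the $D$-code property, $d_L(C_i,C_j)\ge[2t+1-d_L(e_i,e_j)]^+\ge 2t+1-d_L(e_i,e_j)$. Summing the two bounds gives $d_L(\mathcal{E}(u),\mathcal{E}(v))\ge 2t+1$. Hence $\mathcal{E}$ is an $(f,t)_L$-FCLMC with redundancy $r$, and so $r_f^L(k,t)\le r$.

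Combining the two inequalities gives equality. The only step needing care is the upper bound, specifically the observation that the FLDM entries are defined through the minimal distance between preimage classes. That definition is exactly what makes a single redundancy vector per function value suffice for every pair of preimages. The hypothesis on the representatives is used only in the lower bound.
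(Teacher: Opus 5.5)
Your proof is correct and follows the standard argument for this result. The paper itself only quotes it from \cite{verma2025FCLMC}, where it comes from combining Corollary 3.7 with the unconditional upper bound $r_f^L(k,t)\le N_L(D_f^L(t,e_1,\dots,e_\eta))$, obtained from exactly your encoding $u\mapsto(u,C_i)$ for $f(u)=e_i$, and your remark that the hypothesis on the representatives is used only in the lower bound is accurate.
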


\begin{corollary}\cite[Corollary 6.6]{verma2025FCLMC}
Let $f(u)=w_L(u)$ and $\eta=k\left \lfloor \frac{m}{2}\right \rfloor$. For any $k\geq\left \lceil \frac{t+1}{\left \lfloor \frac{m}{2}\right \rfloor}\right \rceil$,
$$r_f^L(k,t)\geq \left \lceil \frac{10t^3+30t^2+20t}{3(t+2)^2\left \lfloor \frac{m}{2}\right \rfloor} \right \rceil.$$
\end{corollary}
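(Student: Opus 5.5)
The plan is to reduce to a Plotkin-type bound on $D$-codes through \cite[Corollary 3.7]{verma2025FCLMC}. The work is to choose a small set of messages whose Lee distance requirement matrix has large entries. Since $f=w_L$, I take a ``chain'' of $M=t+2$ messages $u_0,u_1,\dots,u_{t+1}\in\mathbb{Z}_m^k$ with $w_L(u_i)=i$ and $d_L(u_i,u_j)=|i-j|$.

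\textbf{The chain.} Raise the first coordinate one step at a time from $0$ up to $\lfloor m/2\rfloor$, then the second coordinate, and so on. Every coordinate stays in the range $0,\dots,\lfloor m/2\rfloor$, where the Lee weight equals the integer value. Hence distances along the chain are additive, and $d_L(u_i,u_j)=w_L(u_j-u_i)=j-i$ for $i<j$. The hypothesis $k\geq\lceil (t+1)/\lfloor m/2\rfloor\rceil$ is exactly what guarantees enough coordinates to reach weight $t+1$. The weights $w_L(u_i)=i$ are pairwise distinct, so every off-diagonal pair has $f(u_i)\neq f(u_j)$. Thus $[D_f^L(t,u_0,\dots,u_{t+1})]_{ij}=2t+1-|i-j|$, which is positive because $|i-j|\le t+1$.

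\textbf{Plotkin step.} Let $C=\{C_0,\dots,C_{t+1}\}\subseteq\mathbb{Z}_m^r$ be any $D$-code for this matrix. I sum $d_L(C_i,C_j)$ over ordered pairs and argue coordinatewise. In each coordinate, the sum of pairwise Lee distances among $M$ symbols is at most the value obtained when the symbols are spread uniformly. This is the averaging argument behind Lemma \ref{Csum}, combined with Lemma \ref{sumS}. It gives
$$\sum_{i,j}d_L(C_i,C_j)\le \frac{S M^2 r}{m}.$$
Combining with the requirement matrix,
$$r\ \ge\ \frac{m}{S(t+2)^2}\sum_{i\neq j}\bigl(2t+1-|i-j|\bigr).$$
Write $n=t+1$. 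The double sum equals $2\sum_{d=1}^{n}(n+1-d)(2n-1-d)$, a cubic in $t$. Finally replace $m/S$ (equal to $4/m$ or $4m/(m^2-1)$) by its expression in terms of $\lfloor m/2\rfloor$, and take the ceiling because $r$ is an integer.

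\textbf{Main obstacle.} The delicate part is the Plotkin inequality for a code that is not the whole space. Lemma \ref{Csum} as stated is an identity for uniformly distributed columns, so I must justify that each column's pairwise distance sum is maximized by the most spread-out assignment of symbols. I would first try an exchange argument on each column, moving a symbol to an underused residue class. The second difficulty is matching constants exactly. I must check that the closed form of $\sum_{d=1}^{n}(n+1-d)(2n-1-d)$ together with the bound on $m/S$ gives precisely $(10t^3+30t^2+20t)/(3(t+2)^2\lfloor m/2\rfloor)$, handling even and odd $m$ separately. In the odd case the crude estimate $m/S\ge 2/\lfloor m/2\rfloor$ is false, so the arithmetic there must be done with care.
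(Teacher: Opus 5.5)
\textbf{Even $m$.} Your argument is complete here. The chain is valid, and the Corollary~3.7 reduction applies. The double sum is $2\sum_{s=1}^{t+1}s(s+t-1)=\frac{5t(t+1)(t+2)}{3}$. Since $m/S=4/m=2/\lfloor m/2\rfloor$, the quantity $\frac{m}{S(t+2)^2}\cdot\frac{5t(t+1)(t+2)}{3}$ is exactly $\frac{10t^3+30t^2+20t}{3(t+2)^2\lfloor m/2\rfloor}$.

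The column inequality does not need an exchange argument. Use the $m/2$ half-arcs $A_s=\{s,\dots,s+m/2-1\}$. Then $w_L(a-b)$ equals the number of $A_s$ containing exactly one of $a,b$. If $x$ column symbols lie in $A_s$, that arc separates $x(M-x)\le M^2/4$ unordered pairs. Summing over the arcs gives at most $mM^2/4=SM^2/m$ per column over ordered pairs. For arbitrary $m$ the same bound follows because the circulant matrix $[w_L(a-b)]$ has eigenvalue $S$ on the all-ones vector and non-positive eigenvalues on its orthogonal complement.

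\textbf{Odd $m$.} This is a genuine gap, and it cannot be closed by doing the arithmetic ``with care.'' Here $m/S=\frac{4m}{m^2-1}$, so your method yields only $\frac{20mt(t+1)}{3(m^2-1)(t+2)}$, which is $\frac{m}{m+1}$ times the target. More seriously, the target itself is false for odd $m$.

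Take $m=3$, $t=1$, so $\lfloor m/2\rfloor=1$ and $k\ge 2$. The formula demands $r\ge\lceil 20/9\rceil=3$. Yet $\mathcal{E}(u)=(u,c_{w_L(u)\bmod 3})$ with $c_0=(0,0)$, $c_1=(1,1)$, $c_2=(2,2)$ is an $(f,1)_L$-FCLMC with $r=2$. Set $\delta=|w_L(u)-w_L(v)|\ge 1$. If $\delta\in\{1,2\}$, the redundancy parts are at Lee distance $2$ and $d_L(u,v)\ge 1$, so the total is at least $3$. If $\delta\ge 3$, then $d_L(u,v)\ge\delta\ge 3$ already.

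Similarly, for $m=5$, $t=1$ the formula gives $\lceil 10/9\rceil=2$. But $\mathcal{E}(u)=(u,2w_L(u)\bmod 5)$ has $r=1$: the redundancy distance is $w_L(2)=2$ when $\delta=1$ and $w_L(4)=1$ when $\delta=2$.

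So your proof establishes the stated bound for even $m$, and for odd $m$ only the weaker bound $\left\lceil\frac{20mt(t+1)}{3(m^2-1)(t+2)}\right\rceil$. The odd-$m$ case of the cited corollary is false as written.
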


For positive integers $\lambda$ and $t$, we denote the minimum length of an error-correcting code with $\lambda$ codewords and the minimum Lee distance $2t$ by $N_L(\lambda,2t)$. To determine the exact value of $N_L(\lambda,2t)$ for arbitrary $\lambda$ and $t$ is a fundamental yet challenging problem in coding theory. The following theorem provides an upper bound on $N_L(\lambda,2t)$.

\begin{theorem}\cite{verma2025FCLMC}\label{nl2t}
  For positive integers $\lambda$ and $t$, let $N_L(\lambda,2t)$ denote the minimum length of an error-correcting code with $\lambda$ codewords and minimum Lee distance $2t$. Then, we have 
  \begin{align*}
      N_L(\lambda,2t)\leq \lambda \left \lceil \frac{t}{\left \lfloor \frac{m}{2}\right \rfloor}\right \rceil.
  \end{align*}
  Moreover, 
  $$N_L(2,2t)=\left \lceil \frac{2t}{\left \lfloor \frac{m}{2}\right \rfloor}\right \rceil.$$
\end{theorem}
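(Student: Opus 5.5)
The plan is to prove the two assertions of Theorem \ref{nl2t} separately: an explicit repetition-type construction for the general upper bound, and a matching construction plus a pigeonhole-free distance argument for the exact value of $N_L(2,2t)$. Throughout, put $q=\left\lfloor \frac{m}{2}\right\rfloor$, the maximum Lee weight of an element of $\mathbb{Z}_m$, and $s=\left\lceil \frac{t}{q}\right\rceil$.

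For the first assertion I will build $\lambda$ codewords of length $\lambda s$, split into $\lambda$ blocks of length $s$. The $i$-th codeword $c_i$ is the all-$q$ vector on block $i$ and zero on every other block. For $i\neq j$, the words $c_i$ and $c_j$ differ exactly on blocks $i$ and $j$. On each of these $2s$ coordinates the difference is $\pm q$, whose Lee weight is $q$. Hence
$$d_L(c_i,c_j)=2sq\geq 2t,$$
which gives $N_L(\lambda,2t)\le \lambda s$. This construction is not the most economical one: one could drop a block, since the zero word also works. But the statement only asks for the bound $\lambda s$, so this simple version suffices.

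For the second assertion, the upper bound comes from taking the all-zero word and the all-$q$ word of length $n=\left\lceil \frac{2t}{q}\right\rceil$. Their Lee distance is $nq\ge 2t$. For the lower bound, take any two words $x,y\in\mathbb{Z}_m^n$. Each coordinate contributes at most $q$ to their distance, since $w_L(a)\le q$ for every $a\in\mathbb{Z}_m$ by the representation in \eqref{alternatelee}. So $d_L(x,y)\le nq$, and requiring $d_L(x,y)\ge 2t$ forces $nq\ge 2t$, that is, $n\ge \left\lceil \frac{2t}{q}\right\rceil$.

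There is no real obstacle in either part. The only point needing care is the fact that $\max_{a\in\mathbb{Z}_m} w_L(a)=q$, which I will check in both parity cases of $m$ using the symmetric representation of $\mathbb{Z}_m$ given before \eqref{alternatelee}.
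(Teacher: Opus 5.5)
The paper does not prove this statement; it quotes it from \cite{verma2025FCLMC}. So there is no in-paper argument to compare against, and I judge your proof on its own: it is correct and complete.

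\begin{itemize}
\item For the general bound, your block construction gives pairwise Lee distance $2sq\ge 2t$, where $s=\lceil t/q\rceil$ and $q=\lfloor m/2\rfloor$.
\item For $\lambda=2$, the coordinatewise bound $w_L(a)\le q$ gives $d_L(x,y)\le nq$ for words of length $n$. Together with the pair $\{0,(q,\dots,q)\}$, this pins down $N_L(2,2t)=\lceil 2t/q\rceil$.
\end{itemize}

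You should delete the aside that you ``could drop a block, since the zero word also works'', because it is false. On $\lambda-1$ blocks, adjoining the zero word to $c_1,\dots,c_{\lambda-1}$ gives $d_L(c_i,0)=sq$. This is only guaranteed to be $\ge t$, not $\ge 2t$. For $\lambda=2$ it would give length $\lceil t/q\rceil$, contradicting the exact value $\lceil 2t/q\rceil$ you prove in the second part. A concrete case is $m=2$, $t=1$: the words $0$ and $1$ are at Lee distance $1<2$. Your main argument never uses this aside, so the proof stands.

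One minor point on wording. If ``minimum Lee distance $2t$'' is read as \emph{exactly} $2t$ rather than \emph{at least} $2t$, make two replacements. Replace each all-$q$ block by a block of Lee weight exactly $t$, which is possible since $sq\ge t$. In the second part, replace the all-$q$ word by a word of Lee weight exactly $2t$. With these changes, all the relevant distances equal $2t$.
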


Next, we state a bound on the redundancy of a code in the Lee metric that follows from the Plotkin low-rate average distance bound.

\begin{theorem}\cite{Wyner1968}
    Let $\mathcal{E}:\mathbb{Z}_m^k\to \mathbb{Z}_m^{k+r_d}$ be an  data encoding that defines a code $C$ over $\mathbb{Z}_m$ with the minimum Lee distance $d$ and redundancy $r_d$. Then, we have
    \begin{align*}
        r_d\geq \begin{cases}
        \frac{4d(m^k-1)}{m^{k+1}}-k & \text{ if } m \text{ is even},\\
        \frac{4d(m^k-1)}{(m^2-1)m^{k-1}}-k& \text{ if } m \text{ is odd}.\\
    \end{cases} 
    \end{align*}
\end{theorem}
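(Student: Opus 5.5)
The approach is the Plotkin averaging argument: compare the total pairwise Lee distance of the code $C=\{\mathcal{E}(u)\mid u\in\mathbb{Z}_m^k\}$, which has $M=m^k$ codewords of length $n=k+r_d$, with the lower bound forced by the minimum distance $d$. The upper bound on the total comes from Lemma \ref{Csum}, so the argument is a single double-counting inequality followed by rearrangement.

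\textbf{Step 1 (lower bound on the total).} Since $d_L(C_i,C_j)\geq d$ for each of the $M(M-1)$ ordered pairs $i\neq j$, and the diagonal terms vanish, we get
\[
\sum_{i,j} d_L(C_i,C_j)\;\geq\; M(M-1)\,d .
\]

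\textbf{Step 2 (upper bound on the total).} Lemma \ref{Csum} gives the sum over all pairs as $SM^2n/m$. More carefully, one should only claim $\leq SM^2n/m$. The reason is that on each coordinate the sum over pairs is $\sum_{a,b}n_a n_b\, w_L(a-b)$, where $n_a$ is the number of codewords with symbol $a$ in that coordinate. This quadratic form is maximized at the uniform distribution $n_a=M/m$, which yields the value $M^2 S/m$ by Lemma \ref{sumS}. For our code the first $k$ coordinates are exactly uniform, since $u$ ranges over $\mathbb{Z}_m^k$, so equality holds there in any case. I will take the lemma as stated, giving
\[
\sum_{i,j} d_L(C_i,C_j)\;\leq\; \frac{S M^2 (k+r_d)}{m}.
\]

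\textbf{Step 3 (combine and rearrange).} Putting the two bounds together gives $M(M-1)d\le SM^2(k+r_d)/m$. Dividing by $SM^2/m$ yields
\[
k+r_d\;\geq\; \frac{md(M-1)}{SM}=\frac{md(m^k-1)}{S\,m^k}.
\]
Now substitute $S$ from Lemma \ref{sumS}:
\begin{itemize}
\item For even $m$, $S=m^2/4$, so $k+r_d\ge \frac{4d(m^k-1)}{m^{k+1}}$.
\item For odd $m$, $S=(m^2-1)/4$, so $k+r_d\ge \frac{4d(m^k-1)}{(m^2-1)m^{k-1}}$.
\end{itemize}
Subtracting $k$ gives both cases of the theorem.

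The only delicate point is Step 2, namely justifying that Lemma \ref{Csum} is valid as an upper bound for an arbitrary code and not merely for a uniformly distributed one. If needed, I would justify it per coordinate by the maximization of the quadratic form described there, which is the standard Wyner/Plotkin observation. The rest is arithmetic.
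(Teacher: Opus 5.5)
Your proposal is correct and is the same Plotkin averaging argument the paper relies on: the statement itself is only cited from Wyner, but the paper's own Plotkin-like bound for FCLMCs with data protection runs exactly your double count (minimum-distance lower bound against the upper bound $Sm^{2k-1}(k+r)$ from Wyner's Eq.~(15)), and setting $d_d=d_f=d$ there reproduces this statement. Your caution that Lemma~\ref{Csum} may only be used as an upper bound is warranted, since the total is strictly smaller if, say, some coordinate is constant; your per-coordinate justification is also the right one, because the circulant matrix $[w_L(a-b)]_{a,b}$ has eigenvalue $S$ on the all-ones vector and is negative semidefinite on zero-sum vectors, so each coordinate contributes at most $SM^2/m$.
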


\section{Function-correcting code with data protection in Lee metric codes}\label{FCLMC}
The function-correcting codes provide error correction whenever there is some error in the function values during transmission. In this section, we introduce a class of codes that provide protection against the message and function value simultaneously. We call such codes a “function-correcting Lee metric code with data protection".

\begin{definition}\label{def-fclmc}
  Let $d_d$ and $d_f$ be non-negative integers with $d_d\leq d_f$. An encoding function $\mathcal{E}:\mathbb{Z}_m^k\to\mathbb{Z}_m^{k+r}  $ with $\mathcal{E}(u)=(u,p(u))$ defines an $(f,d
  _d,d_f)_L$-function-correcting Lee metric code  with data protection if 
  \begin{enumerate}
  \item  $d_L(\mathcal{E}(u),\mathcal{E}(v))\geq d_d$ holds for all distinct $u,v\in \mathbb{Z}_m^k$
      \item $d_L(\mathcal{E}(u),\mathcal{E}(v))\geq d_f$ holds for all $u,v\in \mathbb{Z}_m^k$ whenever $f(u)\neq f(v)$.
  \end{enumerate} 
\end{definition}
Equivalently, we denote $(f,d_d,d_f)_L$-FCLMC by $(f,t_d,t_f)_L$-FCLMC, where $t_d=\left \lfloor \frac{d_d-1}{2}\right \rfloor$ is the error-correction capability for data, and $t_f=\left \lfloor \frac{d_f-1}{2}\right \rfloor$ is the  error-correction capability for function values. By definition, we can see that any $(f,t_d,t_f)_L$-FCLMC provides protection against $t_d$ data (message bit) errors and $t_f$ function values errors. The following example illustrates the definition.

\begin{example}
Let $f:\mathbb{Z}_4^2 \to \mathbb{Z}_4,$  defined by $f(x,y)=x^2+y \pmod{4}.$ Define an encoding map as follows:
\begin{align*}
\mathcal{E}:\mathbb{Z}_4^2 \to \mathbb{Z}_4^{2+2},\quad
\mathcal{E}(u)=(u,u_p) \pmod{4}\ \  \forall u\in \mathbb{Z}_4^2,  
\end{align*}
where $u_p$ is given as follows
$$
\begin{array}{c|c|c}
u\in \mathbb{Z}_4^2 & f(u) & (u,u_p) \\ \hline
(0,0) & 0 & (0,0,0,0) \\
(0,1) & 1 & (0,1,1,2) \\
(0,2) & 2 & (0,2,2,0) \\
(0,3) & 3 & (0,3,3,2) \\
(1,0) & 1 & (1,0,1,2) \\
(1,1) & 2 & (1,1,2,0) \\
(1,2) & 3 & (1,2,3,2) \\
(1,3) & 0 & (1,3,0,0) \\
(2,0) & 0 & (2,0,0,0) \\
(2,1) & 1 & (2,1,1,2) \\
(2,2) & 2 & (2,2,2,0) \\
(2,3) & 3 & (2,3,3,2) \\
(3,0) & 1 & (3,0,1,2) \\
(3,1) & 2 & (3,1,2,0) \\
(3,2) & 3 & (3,2,3,2) \\
(3,3) & 0 & (3,3,0,0).
\end{array}
$$
Consider $C=\{(u,u_p):\ u\in \mathbb{Z}_4^2\}$. Then, $C$ is an $(f,d_d=2,d_f=4)_L$-FCLMC, that is, $C$ is a code with $2$-error detection for data and $4$-error detection for function values.  
\end{example}
It is easy to see that whenever $d_d=d_f$, an $(f,d_d,d_f)_L$-FCLMC is an error-correcting code with the minimum Lee distance $d_d=d_f$. It is worth noting that data protection implies function value protection; however, the converse does not hold in general. Next, we present a general construction of function-correcting codes with data protection. Throughout this paper, we assume that $d_d$ and $d_f$ are positive integers satisfying $d_d<d_f$.

\textbf{ General construction:}
Let $t_d=\left \lfloor \frac{d_d-1}{2}\right \rfloor$ and $t_f=\left \lfloor \frac{d_f-1}{2}\right \rfloor$ be positive integers with $t_f> t_d$. For a given function $f:\mathbb{Z}_m^k\to \mathrm{Im}(f)$, an $(f,t_d,t_f)_L$-FCLMC is can be constructed as follows.\\
\textbf{Step-1} (Data encoding) Define an encoding $\mathcal{E}_d:\mathbb{Z}_m^k\to \mathbb{Z}_m^{k+r_d}$ such that $d_L(\mathcal{E}_d(u),\mathcal{E}_d(u))\geq 2t_d+1$ for all $u,v\in \mathbb{Z}_m^k$.\\
\textbf{Step-2} (Function encoding) Define another encoding $\mathcal{E}_f:\mathbb{Z}_m^{k+r_d}\to \mathbb{Z}_m^{k+r_d+r_f}$ such that $$d_L(\mathcal{E}_f(\mathcal{E}_d(u)), \mathcal{E}_f(\mathcal{E}_d(v)))\geq 2t_f+1 $$ for all $u,v$ with $f(u)\neq f(v)$.
Then an encoding $\mathcal{E}:\mathbb{Z}_m^k\to \mathbb{Z}_m^{k+r_d+r_f}$ defined as $\mathcal{E}(u)=\mathcal{E}_f(\mathcal{E}_d(u))$, defines an $(f,t_d,t_f)_L$-FCLMC with redundancy $r=r_d+r_f$.

Linear codes are well known for their efficient encoding and decoding algorithms. To exploit these advantages, one may employ a linear encoding $\mathcal{E}_d$ in Step-1, so that the set
$\{\mathcal{E}_d(u): u\in \mathbb{Z}_m^k\}$
forms a linear code of length $k+r_d$ over $\mathbb{Z}_m$. Nevertheless, the results presented in this paper are developed for general codes and do not rely on the assumption of linear encoding.

\begin{definition}
The optimal redundancy $r_f^L(k,d_d,d_f)$ (equivalently, $r_f^L(k,t_d,t_f)$) for a given function $f:\mathbb{Z}_m^k\to Im(f)$ is defined as 
$$r_f^L(k,d_d,d_f)=\min\{r\mid \exists  (f,d_d,d_f)_L-\text{FCLMC with redundancy } r\}.$$   
\end{definition}

\subsection{Bound on redundancy of FCLMCs with data protection}
In this subsection, we derive lower and upper bounds on the optimal redundancy of function-correcting codes with data protection with respect to the Lee metric. We begin by introducing the Lee distance requirement matrix for data protection associated with a given function.

\begin{definition}[Coded Lee distance requirement matrix (CLDRM)]
Let $u_1,u_2,\dots,u_M\in\mathbb{Z}_m^k$, and $\mathcal{E}_d:\mathbb{Z}_m^k\to \mathbb{Z}_m^{k+r_d}$ be a data encoding. The coded Lee distance requirement matrix (CLDRM) $D_{\mathcal{E}_d, f}(t_f, u_1,u_2,\dots,u_M)$ for an $(f,t_d,t_f)_L$-FCLMC is an $M\times M$ matrix with entries
\begin{align*}
    [D_{\mathcal{E}_d, f}(t_f, u_1,u_2,\dots,u_M)]_{ij}=\begin{cases}
        [2t_f+1-d_L(\mathcal{E}_d(u_i),\mathcal{E}_d(u_j))]^+ & \text{ if } f(u_i)\neq f(u_j)\\
        0 & \text{ otherwise.}
    \end{cases}
\end{align*}
\end{definition}  

\begin{definition}
    Let $f:\mathbb{Z}_m^k\to \mathrm{Im}(f)$ and  and $\mathcal{E}_d:\mathbb{Z}_m^k\to \mathbb{Z}_m^{k+r_d}$ be a data encoding map. Then the coded Lee distance between $e_i,e_j\in \text{Im}(f)$ is defined as
    \begin{align*}
        d_L^c(e_i,e_j)=\min_{u_1,u_2\in \mathbb{Z}_m^k} \{d_L(\mathcal{E}_d(u_1),\mathcal{E}_d(u_2)):\ f(u_1)=e_i, f(u_2)=e_j\}
    \end{align*}
\end{definition}

\begin{definition}[Coded Function Lee Distance Matrix (CFLDM)]
Let $f:\mathbb{Z}_m^k\to \text{Im}(f)=\{e_1,e_2,\dots,e_{\eta}\}$ with $|\mathrm{Im}(f)|=\eta$  and $\mathcal{E}_d:\mathbb{Z}_m^k\to \mathbb{Z}_m^{k+r_d}$ be a data encoding map. Then an $\eta\times \eta$ matrix $D_{\mathcal{E}_d,f}(t_f,e_1,e_2,\dots e_{\eta})$ with entries 
\begin{align*}
    [D_{\mathcal{E}_d, f}(t_f, e_1,e_2,\dots,e_{\eta})]_{ij}=\begin{cases}
        [2t_f+1-d_L^c(e_i,e_j)]^+ & \text{ if } i\neq j\\
        0 & \text{ otherwise},
    \end{cases}
\end{align*}
 is called a coded function Lee distance matrix.   
\end{definition}

\begin{definition}
 Let $u_1,u_2,\dots,u_M\in\mathbb{Z}_m^k$ and $f:\mathbb{Z}_m^k\to \mathrm{Im}(f)$ be a function. The Lee distance requirement matrix for data protection with respect to $f$ is defined by an $M\times M$ matrix $D_f^L(t_d,t_f,u_1,u_2,\dots,u_M)$ with entries  
  $$[D_f^L(t_d,t_f,u_1,u_2,\dots,u_M)]_{ij}=\begin{cases}
  [2t_d+1-d_L(u_i,u_j)]^{+} & \text{if } u_i\neq u_j \text{ and } f(u_i)= f(u_j)\\
      [2t_f+1-d_L(u_i,u_j)]^{+} & \text{if } f(u_i)\neq f(u_j)\\
      0 & \text{ otherwise}
  \end{cases}
  $$
  for all $1\leq i,j\leq M$.   
\end{definition}

For $D=D_f^L(t_d,t_f,u_1,u_2,\dots,u_{m^k})$, if we have a $D$-code $C=\{C_1, C_2,\dots, C_{m^k}\}$ in the Lee metric, then it can be used to construct  an $(f,t_d,t_f)_L$-FCLMC by defining the encoding map as follows:
\begin{align*}
    \mathcal{E}(u_i)=(u_i,C_i)\ \  \forall\  1\leq i\leq m^k.
\end{align*}

\begin{theorem}\label{r=N}
    Let $f:\mathbb{Z}_m^k\to Im(f)$ be a function and $\mathbb{Z}_m^k=\{u_1,u_2,\dots,u_{m^k}\}$.  Then 
$$r_f^L(k,t_d,t_f)=N_L(D_f^L(t_d,t_f,u_1,u_2,\dots,u_{m^k})).$$
\end{theorem}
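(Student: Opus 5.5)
The plan is to prove the equality by establishing the two inequalities $r_f^L(k,t_d,t_f)\le N_L(D)$ and $r_f^L(k,t_d,t_f)\ge N_L(D)$ separately, where $D=D_f^L(t_d,t_f,u_1,\dots,u_{m^k})$. The argument rests on one observation about systematic encodings. If $\mathcal{E}(u)=(u,p(u))$, then the Lee distance splits additively over coordinates, so
\begin{align*}
d_L(\mathcal{E}(u_i),\mathcal{E}(u_j))=d_L(u_i,u_j)+d_L(p(u_i),p(u_j)).
\end{align*}
Consequently, a lower bound $d_L(\mathcal{E}(u_i),\mathcal{E}(u_j))\ge 2t+1$ holds if and only if $d_L(p(u_i),p(u_j))\ge [2t+1-d_L(u_i,u_j)]^+$, because a Lee distance is always nonnegative. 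In the definition of an $(f,t_d,t_f)_L$-FCLMC, I will read the conditions in their corrected form, $d_d=2t_d+1$ and $d_f=2t_f+1$.

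For the upper bound, take a $D$-code $\{C_1,\dots,C_{m^k}\}\subseteq\mathbb{Z}_m^{N_L(D)}$ and define $\mathcal{E}(u_i)=(u_i,C_i)$, exactly as described just before the statement. Consider a distinct pair $u_i\neq u_j$ with $f(u_i)=f(u_j)$. By the additivity identity,
\begin{align*}
d_L(\mathcal{E}(u_i),\mathcal{E}(u_j))\ge d_L(u_i,u_j)+[2t_d+1-d_L(u_i,u_j)]^+\ge 2t_d+1.
\end{align*}
The case $f(u_i)\neq f(u_j)$ is the same with $t_f$ in place of $t_d$, giving distance at least $2t_f+1$. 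That pair is then also distinct, and $2t_f+1>2t_d+1$, so the data-protection condition holds for it as well. Hence $\mathcal{E}$ is an $(f,t_d,t_f)_L$-FCLMC with redundancy $N_L(D)$, which gives $r_f^L\le N_L(D)$.

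For the lower bound, take an optimal $(f,t_d,t_f)_L$-FCLMC $\mathcal{E}(u)=(u,p(u))$ with redundancy $r=r_f^L(k,t_d,t_f)$, and set $C_i=p(u_i)$. Running the additivity identity in reverse, each defining condition of the FCLMC becomes $d_L(C_i,C_j)\ge [D]_{ij}$, and the pairs with $i=j$ are trivial because $[D]_{ii}=0$. So $\{C_i\}$ is a $D$-code of length $r$, which gives $N_L(D)\le r$.

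The only step needing care is bookkeeping rather than a real obstacle. The $D$-code may have repeated codewords, and that is allowed: whenever a pair needs separation, the positive entry $[D]_{ij}$ forces $C_i\neq C_j$, and otherwise the data part $u_i\neq u_j$ already keeps the codewords of $\mathcal{E}$ distinct.
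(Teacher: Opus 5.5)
Your proposal is correct and follows the paper's proof. Both directions rest on the additivity of Lee distance over the systematic split $(u,p(u))$: the upper bound comes from $\mathcal{E}(u_i)=(u_i,C_i)$, and the lower bound from reading the redundancy vectors of an optimal code as a $D$-code. The paper phrases the lower bound as a contradiction (assuming $r<N_L(D)$), while you argue it directly, which is the same argument stated more cleanly.
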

 \begin{proof}
  Denote $r_f^L(k,t_d,t_f)$ by $r$ and $D_f^L(t_d,t_f,u_1,u_2,\dots,u_{m^k})$ by $D$ for simplicity. Let $C$ be a $D$-code over $\mathbb{Z}_m$ of length $N_L(D)$  in Lee metric. Then the codewords of $C$ can be used as redundancy for an $(f,t_d,t_f)_L$-FCLMC, that is, $r\leq N_L(D)$.  Next, assume that $r< N_L(D)$, then there exists an $(f,t_d,t_f)_L$-FCLMC with redundancy $r$ and encoding map $\mathcal{E}:\mathbb{Z}_m^k\to \mathbb{Z}_m^{k+r}$ such that one of the following must be true.
  \begin{enumerate}
      \item There exist two redundancy vectors $u_p,v_p$ of length $r$ for some $u, v\in\mathbb{Z}_m^k$ with $u\neq v$ such that $d_L(u_p,v_p)<2t_d+1-d_L(u,v)$.
      \item There exist two redundancy vectors $u_p',v_p'$ of length $r$ for some $u', v'\in\mathbb{Z}_m^k$ with $f(u')\neq f(v')$ such that $d_L(u_p',v_p')<2t_f+1-d_L(u',v')$.
  \end{enumerate}
  This implies that $d_L(\mathcal{E}(u),\mathcal{E}(v))< 2t_d+1$ or $d_L(\mathcal{E}(u'),\mathcal{E}(v'))< 2t_f+1$ for $f(u')\neq f(v')$, which is a contradiction to the fact $\mathcal{E}$ defines an $(f,t_d,t_f)_L$-FCLMC. Hence, $r=N_L(D)$.  
\end{proof}

\begin{corollary}
   For any function $f : \mathbb{Z}_m^k \to \text{Im}(f)$ and $\{u_1,u_2,\dots,u_m\} \subseteq \mathbb{Z}_m^k$, we have
\begin{align*}
    r_f^L(k,t_d,t_f) \geq N_L(D_f^L(t_d,t_f,u_1,u_2,\dots,u_m)).
\end{align*}  
Particularly, if $|\text{Im}(f)|\geq 2$, then $r_f^L(k,t_d,t_f)\geq N_L(2,2t_f)=\left \lceil \frac{2t_f}{\left \lfloor \frac{m}{2}\right \rfloor}\right \rceil$.
\end{corollary}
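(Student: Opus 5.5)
The first inequality follows from Theorem \ref{r=N} together with a monotonicity property of $N_L(\cdot)$ under passing to principal submatrices. Fix an optimal $(f,t_d,t_f)_L$-FCLMC with encoding $\mathcal{E}(u)=(u,p(u))$ and redundancy $r=r_f^L(k,t_d,t_f)$. By Theorem \ref{r=N} and its proof, the redundancy vectors $\{p(w): w\in\mathbb{Z}_m^k\}$ form a $D$-code for $D=D_f^L(t_d,t_f,w_1,\dots,w_{m^k})$. To make this explicit, note that $d_L(\mathcal{E}(u),\mathcal{E}(v))=d_L(u,v)+d_L(p(u),p(v))$. Hence if $u\neq v$ and $f(u)=f(v)$, then $d_L(p(u),p(v))\geq 2t_d+1-d_L(u,v)$. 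If instead $f(u)\neq f(v)$, then $d_L(p(u),p(v))\geq 2t_f+1-d_L(u,v)$. Since Lee distances are nonnegative, the positive parts $[\cdot]^+$ are also satisfied.

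Now restrict to the given subset $\{u_1,\dots,u_m\}$. Take the codewords $C_i=p(u_i)$. Each entry of $D_f^L(t_d,t_f,u_1,\dots,u_m)$ coincides with the corresponding entry of the full matrix, because the entries depend only on the pair $(u_i,u_j)$. So $\{C_1,\dots,C_m\}$ is a $D_f^L(t_d,t_f,u_1,\dots,u_m)$-code of length $r$. By the definition of $N_L$ as a minimum, this gives $N_L(D_f^L(t_d,t_f,u_1,\dots,u_m))\leq r$. If some $u_i$ coincide, the corresponding rows are zero against each other, and repeated codewords are allowed, so nothing breaks.

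For the particular case, use $|\mathrm{Im}(f)|\geq 2$ to choose $u,v$ with $f(u)\neq f(v)$, and apply the first part to the two-element set $\{u,v\}$. The only nonzero entry of the matrix is $[2t_f+1-d_L(u,v)]^+$. This entry depends on $d_L(u,v)$, which is the main obstacle: for an arbitrary pair it falls short of $2t_f$.

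To resolve it, I would choose the pair so that $d_L(u,v)=1$. Such a pair exists: walk from a preimage of one value to a preimage of another by steps of Lee distance $1$ (changing one coordinate by $\pm1$). Somewhere along the walk $f$ must change, giving adjacent $u,v$ with $f(u)\neq f(v)$. The required distance is then $2t_f+1-1=2t_f$.

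A $2\times 2$ matrix with off-diagonal entry $2t_f$ asks exactly for two words at Lee distance at least $2t_f$. So its minimal length is $N_L(2,2t_f)$, which by Theorem \ref{nl2t} equals $\left\lceil \frac{2t_f}{\lfloor m/2\rfloor}\right\rceil$. This is the stated bound.
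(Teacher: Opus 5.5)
Your proposal is correct and matches the paper's implicit argument: the paper gives no separate proof, since the first inequality is the immediate restriction of Theorem \ref{r=N} to a principal submatrix. The particular case rests on the existence of a pair $u,v$ with $f(u)\neq f(v)$ and $d_L(u,v)=1$; the paper simply asserts this at the start of the proof of Theorem \ref{rlower}, and your walk argument through unit Lee steps supplies the justification.
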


Utilizing the above bound, we present an explicit lower bound on the optimal redundancy for FCLMCs with data protection.

\begin{theorem}[Lower Bound]\label{rlower}
Let $f : \mathbb{Z}_m^k \to \text{Im}(f)$ with $2\leq |\text{Im}(f)|\leq (m-1)k$. Then 
\begin{align*}
r_f^L(k,t_d,t_f) \geq \left \lceil \frac{4t_f+2t_d-2m+3}{m}\right \rceil.
\end{align*}
\end{theorem}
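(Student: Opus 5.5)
Following the Corollary after Theorem \ref{r=N}, it is enough to choose a few messages $u_1,\dots,u_M$ whose matrix $D=D_f^L(t_d,t_f,u_1,\dots,u_M)$ has large entries. We then apply the Plotkin-type averaging of Lemma \ref{Csum} to any $D$-code of length $r$. The target $\lceil (4t_f+2t_d-2m+3)/m\rceil$ has the shape of a Plotkin bound for three codewords: two pairs with requirement about $2t_f+1-(\text{small})$ and one pair with requirement about $2t_d+1-(\text{small})$. So we take $M=3$.

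\textbf{Choosing the three messages.} Since $|\mathrm{Im}(f)|\ge 2$, there are adjacent messages with different function values, i.e.\ $u_1,u_2$ with $d_L(u_1,u_2)=1$ and $f(u_1)\ne f(u_2)$. To see this, take any two messages with different values and walk from one to the other by unit Lee steps; the value of $f$ must change at some step. Next pick $u_3$ with $d_L(u_3,u_1)=1$ and $d_L(u_3,u_2)\le 2$. We need $f(u_3)$ to differ from at least one of $f(u_1),f(u_2)$, and this is automatic, since $f(u_1)\neq f(u_2)$. Thus exactly one of the pairs $\{u_1,u_3\}$, $\{u_2,u_3\}$ may share a function value, and then only the data requirement $2t_d+1-d_L$ applies to it. In the worst case the entries of $D$ are
\[
D_{12}\ge 2t_f,\qquad \min(D_{13},D_{23})\ge 2t_f-1,\qquad \max(D_{13},D_{23})\ge 2t_d-1 ,
\]
using $[x]^+\ge x$. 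We will fine-tune the choice of $u_3$ (for example, on the same coordinate line as $u_1,u_2$) so that the exact constants match the target $-2m+3$ offset.

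\textbf{Plotkin step.} Let $\{C_1,C_2,C_3\}\subseteq\mathbb{Z}_m^r$ be a $D$-code. By Lemma \ref{Csum}, and more precisely its per-coordinate version (in each coordinate, the total Lee distance over the three ordered pairs is at most a fixed constant), we get
\[
2\sum_{i<j} d_L(C_i,C_j)\le c_m\, r .
\]
Here $c_m$ is the maximum over triples $(a,b,c)\in\mathbb{Z}_m^3$ of $2\big(|a-b|_L+|b-c|_L+|a-c|_L\big)$, which is about $2m$. Combining this with $\sum_{i<j}D_{ij}\ge 2t_f+(2t_f-1)+(2t_d-1)$ gives an inequality of the form $m r\ge 4t_f+2t_d-(\text{const})$. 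Since $r$ is an integer, we then take the ceiling.

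\textbf{Main obstacle.} The hardest part is getting the constants exactly right. We must bound the three-point Lee perimeter in one coordinate sharply: it is at most $m$, via three points on the cycle $\mathbb{Z}_m$. We must also arrange the distances of the chosen triple so that the loss totals precisely $2m-3$, or less, which still implies the stated bound. We also need to check how the hypothesis $|\mathrm{Im}(f)|\le (m-1)k$ enters; we expect it is only used to guarantee the neighborhood choices for $u_3$, or else it is not essential. If the perimeter bound $m$ gives a stronger constant than $-2m+3$, the stated inequality follows a fortiori.
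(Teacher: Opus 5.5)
Your argument is correct in outline and differs from the paper's in how the three messages are chosen. The final step is the same: a three-word $D$-code combined with the cycle inequality $d_L(C_1,C_2)+d_L(C_1,C_3)+d_L(C_2,C_3)\le mr$. You identify this inequality correctly; Lemma \ref{Csum} with $M=3$ would only give the weaker $\sum_{i<j}d_L(C_i,C_j)\le 9Sr/(2m)$.

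The paper uses $|\mathrm{Im}(f)|\le (m-1)k$ for a pigeonhole argument over $u_1$ and its $(m-1)k$ neighbours $u_1+e_{i,\alpha}$, which forces a repeated function value. It then bounds the resulting Lee distances only by $\lfloor m/2\rfloor,\lfloor m/2\rfloor,2\lfloor m/2\rfloor$, and that is where the loss $2m$ comes from. You observe that no repeated value is needed. Once $f(u_1)\ne f(u_2)$, at least two of the three pairs carry the $t_f$-requirement and the third carries at least the $t_d$-requirement (using $t_f>t_d$ if all three values differ). So $u_3$ can be taken Lee-adjacent to $u_1$. Done carefully, this gives $mr\ge 4t_f+2t_d+3-(d_{12}+d_{13}+d_{23})\ge 4t_f+2t_d-1$. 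That is strictly stronger than the stated bound for $m\ge 3$ and equal to it for $m=2$. It also shows the image-size hypothesis is needed only to guarantee $m^k\ge 3$.

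Two details need tightening.

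\textbf{The constant for $m=2$.} Your worst-case entries $2t_f,\ 2t_f-1,\ 2t_d-1$ sum to $4t_f+2t_d-2$. This gives the statement a fortiori only for $m\ge 3$. For $m=2$ it gives $r\ge 2t_f+t_d-1$, one short of $\lceil(4t_f+2t_d-1)/2\rceil=2t_f+t_d$. The two worst cases cannot occur together, because the pair at distance $2$ is counted only once. Summing the exact entries as above, with each $d_{ij}$ appearing once, repairs this.

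\textbf{Existence of $u_3\notin\{u_1,u_2\}$.} For $m\ge 3$, take $u_3=2u_1-u_2$; then $d_L(u_2,u_3)=w_L(2)\le 2$. For $m=2$, the coordinate line through $u_1,u_2$ has no third point, so you must take $u_3=u_1+e_j$ with $j\ne i$, which requires $k\ge 2$. The case $m=2$, $k=1$ is exactly what $2\le|\mathrm{Im}(f)|\le (m-1)k$ excludes. The hypothesis is not idle there: for $f$ injective on $\mathbb{Z}_2$, redundancy $2t_f$ suffices, which is below $2t_f+t_d$ whenever $t_d\ge 1$. So the stated bound is actually false in that case.
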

\begin{proof}
As $|\text{Im}(f)|\geq 2$, we have $u_1,v\in \mathbb{Z}_m^k$ with $f(u_1)\neq f(v)$ and $d_L(u_1,v)=1$.
For $i\in [k]$ and $\alpha\in\mathbb{Z}_m^*=\mathbb{Z}_m\setminus\{0\}$, denote $e_{i,\alpha}=(0,,\dots,\underbrace{\alpha}_{i^{th}-place},0,\dots,0)\in \mathbb{Z}_m^k$.
Let $X_{u_1}=\{u_1+e_{i,\alpha}:\ i\in [k], \alpha\in \mathbb{Z}_m^*\}$. Then $\{f(x): x \in X_{u_1}\}\cup \{f(u_1)\}\subseteq \text{Im}(f)$. Consequently, $|\{f(x): x \in X_{u_1}\}\cup \{f(u_1)\}|\leq (m-1)k$. As there are $(m-1)k$ choices for $e_{i,\alpha}$, either there exist $i,j\in [k]$ and $\alpha,\beta\in \mathbb{Z}_m^*$ with $f(u_1+e_{i,\alpha})=f(u_1+e_{j,\beta})$, where $e_{i,\alpha}$ and $e_{j,\beta}$ are distinct, or there exists $i\in [k]$ and $\alpha\in \mathbb{Z}_m^*$ with $f(u_1+e_{i,\alpha})=f(u_1)$. \\
\textbf{Case 1:} Let $f(u_1+e_{i,\alpha})=f(u_1+e_{j,\beta})$ for distinct $e_{i,\alpha}$ and $e_{j,\beta}$. Let $u_2=u_1+e_{i,\alpha}$ and $u_3=u_1+e_{j,\beta}$. Then
\begin{align*}
    d_L(u_1,u_2)= w_L(\alpha)  \hspace{1 cm }& d_L(u_2,u_3)= w_L(e_{i,\alpha}-e_{j,\beta})&d_L(u_1,u_3)= w_L(\beta),
\end{align*}
and $f(u_1)\neq f(u_2)=f(u_3)$. Thus,
\begin{align*}
    D_f^L(t_d,t_f,u_1,u_2,u_3)=\begin{bmatrix}
        0 & 2t_f+1-w_L(\alpha) &2t_f+1-w_L(\beta)\\
       2t_f+1-w_L(\alpha) & 0 & 2t_d+1-w_L( e_{i,\alpha}-e_{j,\beta})\\
       2t_f+1-w_L(\beta) & 2t_d+1-w_L( e_{i,\alpha}-e_{j,\beta})& 0 
    \end{bmatrix}.
\end{align*}
Denote $D=D_f^L(t_d,t_f,u_1,u_2,u_3)$. Let $\{C_1,C_2,C_3\}$ be a $D_L$-code of length $r$. Then
\begin{align*}
    d_L(C_1,C_2)\geq& 2t_f+1-w_L(\alpha)\\
    d_L(C_1,C_2)\geq& 2t_d+1-w_L( e_{i,\alpha}-e_{j,\beta})\\
    d_L(C_1,C_2)\geq& 2t_f+1-w_L(\beta).   
\end{align*}
Also, for any three vectors of length $r$ over $\mathbb{Z}_m$, we have $$d_L(C_1,C_2)+d_L(C_2,C_3)+d_L(C_1,C_3)\leq mr.$$
Therefore, by combining above, we get
\begin{align*}
    4t_f+2-w_L(\alpha)-w_L(\beta)\leq d_L(C_1,C_2)+d_L(C_1,C_3)\leq mr-(2t_d+1-w_L( e_{i,\alpha}-e_{j,\beta})),
\end{align*}
implies that
\begin{align*}
    mr\geq& 4t_f+2-w_L(\alpha)-w_L(\beta)+2t_d+1-w_L( e_{i,\alpha}-e_{j,\beta})\\
    \geq & 4t_f+2t_d -w_L(\alpha)-w_L(\beta)-w_L( e_{i,\alpha}-e_{j,\beta})+3\\
    \geq & 4t_f+2t_d -\left \lfloor \frac{m}{2}\right \rfloor- \left \lfloor \frac{m}{2}\right \rfloor-2\left \lfloor \frac{m}{2}\right \rfloor+3\\
    \geq&  4t_f+2t_d-2m+3. 
\end{align*}
Hence, $r\geq \left \lceil \frac{4t_f+2t_d-2m+3}{m}\right \rceil$. \\

\textbf{Case 2:} Let $f(u_1+e_{i,\alpha})=f(u_1)$ for some $i\in [k]$ and $\alpha\in \mathbb{Z}_m^*$. Take $u_3=u_1+e_{i,\alpha}$. From initial assumption $|\text{Im}(f)|\geq 2$, there exists $u_2$ with  $f(u_1)\neq f(u_2)$ and $d_L(u_1,u_2)=1$. We can assume $u_2=u_1+e_{j,1}$ for some $j\in [k]$. Then, we have $f(u_1)=f(u_3)\neq f(u_2)$, and thus,
\begin{align*}
    D_f^L(t_d,t_f,u_1,u_2,u_3)=\begin{bmatrix}
        0 & 2t_f &2t_d+1-w_L(\alpha)\\
       2t_f & 0 & 2t_f+1-w_L( e_{i,\alpha}-e_{j,1})\\
       2t_d+1-w_L(\alpha) & 2t_f+1-w_L( e_{i,\alpha}-e_{j,1})& 0 
    \end{bmatrix}.     
\end{align*}
Using similar arguments as in \textbf{Case 1}, we get 
\begin{align*}
    r\geq \left \lceil \frac{4t_f+2t_d-(3m/2)+2}{m}\right\rceil\geq \left \lceil \frac{4t_f+2t_d-2m+3}{m}\right \rceil.
\end{align*}
\end{proof}

\begin{corollary}\cite[Theorem 4]{Rajput2025fcc-data}
    Let $m=2$. Then the Lee metric coincide with the Hamming metric over $\mathbb{Z}_2$ and the optimal redundancy $r_f^H(k,t_d,t_f)$ of an $(f,t_d,t_f)$-FCC with data protection satisfies
    \begin{align*}
        r_f^H(k,t_d,t_f)\geq 2t_f+t_d.
    \end{align*}
\end{corollary}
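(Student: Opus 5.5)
The plan is to obtain the corollary by specializing Theorem~\ref{rlower} to $m=2$; no new combinatorial argument should be needed. First I will check that the Lee metric over $\mathbb{Z}_2$ is the Hamming metric. For $a\in\mathbb{Z}_2$ we have $w_L(a)=\min\{a,-a\}=a\in\{0,1\}$, so $w_L(a)=1$ exactly when $a\neq 0$. Hence $d_L(u,v)$ counts the coordinates in which $u$ and $v$ differ, that is, $d_L=d_H$ on $\mathbb{Z}_2^n$.

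It follows that the conditions of Definition~\ref{def-fclmc} become exactly the conditions defining an $(f,t_d,t_f)$-FCC with data protection in the Hamming metric. Therefore $r_f^H(k,t_d,t_f)=r_f^L(k,t_d,t_f)$ when $m=2$.

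Next I will substitute $m=2$ into Theorem~\ref{rlower}. Its hypothesis $2\le|\mathrm{Im}(f)|\le (m-1)k$ becomes $2\le |\mathrm{Im}(f)|\le k$, and I will assume this as the implicit standing condition. The bound then reads
\begin{align*}
r_f^H(k,t_d,t_f)\ \ge\ \left\lceil \frac{4t_f+2t_d-4+3}{2}\right\rceil=\left\lceil 2t_f+t_d-\tfrac12\right\rceil = 2t_f+t_d,
\end{align*}
where the last equality uses that $2t_f+t_d$ is an integer.

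The only delicate point is that the constant $-2m+3$ is a little lossy, so one should make sure the ceiling still lands exactly on $2t_f+t_d$; the computation above confirms that it does. As a cross-check I will rerun Case~1 of the proof of Theorem~\ref{rlower} with $\lfloor m/2\rfloor=1$, so that $w_L(\alpha)=w_L(\beta)=1$ and $w_L(e_{i,\alpha}-e_{j,\beta})=2$. This gives $2r\ge 4t_f+2t_d-1$, hence $r\ge 2t_f+t_d$. Case~2 gives $2r\ge 4t_f+2t_d-1$ as well, so both cases agree with the specialized bound.
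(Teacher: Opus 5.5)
Your proposal is correct and matches the paper's proof, which simply substitutes $m=2$ into Theorem~\ref{rlower}; your ceiling computation $\lceil (4t_f+2t_d-1)/2\rceil=2t_f+t_d$ and your Case~1/Case~2 cross-checks are right. You are also right to carry over the hypothesis $2\le|\mathrm{Im}(f)|\le k$, which the corollary's statement omits but some such condition is genuinely needed: for $k=1$, $f$ the identity on $\mathbb{Z}_2$, $t_d=1$, $t_f=2$, redundancy $4<2t_f+t_d=5$ already suffices.
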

\begin{proof}
  We get the desired result by putting $m=2$ in Theorem \ref{rlower}. 
\end{proof}

\begin{theorem}[Upper bound]\label{thm-rnltd}
Let $\mathcal{E}_d:\mathbb{Z}_m^k\to \mathbb{Z}_m^{k+r_d}$ be a data encoding map with redundancy $r_d$ and $t_d$-error-correction (that is, the corresponding code has the minimum Lee distance $2t_d+1$). Let $\mathbb{Z}_m^k=\{u_1,u_2,\dots,u_{m^k}\}$, for any function $f:\mathbb{Z}_m^k\to \text{Im}(f)$, we have
\begin{align*}
    r_f^L(k,t_d,t_f)\leq N_L(D_{\mathcal{E}_d,f}(t_f,u_1,u_2,\dots,u_{m^k}))+r_d.
\end{align*}
\end{theorem}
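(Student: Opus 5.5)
The plan is a direct construction that follows the two-step general construction above. Put $D=D_{\mathcal{E}_d,f}(t_f,u_1,u_2,\dots,u_{m^k})$ and $N=N_L(D)$. By the definition of $N_L(D)$ there is a $D$-code $C=\{C_1,C_2,\dots,C_{m^k}\}\subseteq\mathbb{Z}_m^{N}$ in the Lee metric. Write $\mathcal{E}_d(u_i)=(u_i,p_d(u_i))$ with $p_d(u_i)\in\mathbb{Z}_m^{r_d}$. Then define
\begin{align*}
\mathcal{E}(u_i)=(\mathcal{E}_d(u_i),C_i)=(u_i,p_d(u_i),C_i)\in\mathbb{Z}_m^{k+r_d+N},\quad 1\leq i\leq m^k.
\end{align*}
This map has the systematic form $(u,p(u))$ with $p(u_i)=(p_d(u_i),C_i)$, so its redundancy is $r_d+N$. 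It therefore suffices to show that $\mathcal{E}$ defines an $(f,t_d,t_f)_L$-FCLMC. The bound $r_f^L(k,t_d,t_f)\leq N+r_d$ then follows from the definition of optimal redundancy.

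The key tool is that the Lee distance is additive over coordinates. For all $i,j$,
\begin{align*}
d_L(\mathcal{E}(u_i),\mathcal{E}(u_j))=d_L(\mathcal{E}_d(u_i),\mathcal{E}_d(u_j))+d_L(C_i,C_j).
\end{align*}
There are two conditions to check.

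\textbf{Data protection.} Take $u_i\neq u_j$. Since $d_L(C_i,C_j)\geq 0$, the distance above is at least $d_L(\mathcal{E}_d(u_i),\mathcal{E}_d(u_j))$. This is at least $2t_d+1$ because $\mathcal{E}_d$ has minimum Lee distance $2t_d+1$.

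\textbf{Function protection.} Take $f(u_i)\neq f(u_j)$. Then $d_L(C_i,C_j)\geq [2t_f+1-d_L(\mathcal{E}_d(u_i),\mathcal{E}_d(u_j))]^+$. If the bracket is positive, adding $d_L(\mathcal{E}_d(u_i),\mathcal{E}_d(u_j))$ gives a total of at least $2t_f+1$. If the bracket is nonpositive, then $d_L(\mathcal{E}_d(u_i),\mathcal{E}_d(u_j))\geq 2t_f+1$ already.

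No step here is a real obstacle. The only point needing care is the bookkeeping with the positive part $[\cdot]^+$ in the function-protection case, which the two-case split above handles.
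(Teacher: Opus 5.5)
Your proposal is correct and matches the paper's proof. It uses the same concatenated encoding $\mathcal{E}(u_i)=(\mathcal{E}_d(u_i),C_i)$ with a $D_{\mathcal{E}_d,f}$-code of length $N_L(D)$, verified by the same two additive-distance checks, and your explicit split on whether $[2t_f+1-d_L(\mathcal{E}_d(u_i),\mathcal{E}_d(u_j))]^+$ vanishes, together with your remark that the map stays systematic, are small points of rigor the paper leaves implicit.
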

\begin{proof}
Denote $D_2=D_{\mathcal{E}_d,f}(t_f,u_1,u_2,\dots,u_{m^k})$. 
Let $\{C_1,C_2,\dots,C_{m^k}\}$ be a $D_2$-code with length $N_L(D_2)$ in Lee metric. Define an encoding 
\begin{align*}
    \mathcal{E}:\mathbb{Z}_m^k\to \mathbb{Z}_m^{k+r_d+N_L(D_2)}, \mathcal{E}(u_i)=(\mathcal{E}_d(u_i),C_i) \ \forall 1\leq i\leq m^k.
\end{align*}
We claim that $\{\mathcal{E}(u_i): 1\leq i\leq m^k\}$ is an $(f,t_d,t_f)_L$-FCLMC. If $u_i\neq u_j$, then $d_L(C_i,C_j)\geq 0$ and we have 
\begin{align*}
    d_L(\mathcal{E}(u_i),\mathcal{E}(u_j))=d_L(\mathcal{E}_d(u_i),\mathcal{E}_d(u_j))+d_L(C_i,C_j)\geq 2t_d+1.
\end{align*}
If $f(u_i)\neq f(u_j)$, then $d_L(C_i,C_j)\geq [D_2]_{ij}=2t_f+1-d_L(\mathcal{E}_d(u_i),\mathcal{E}_d(u_j))$. This implies that 
\begin{align*}
    d_L(\mathcal{E}(u_i),\mathcal{E}(u_j))=d_L(\mathcal{E}_d(u_i),\mathcal{E}_d(u_j))+d_L(C_i,C_j)\geq 2t_f+1.
\end{align*}
By definition of FCLMC with data protection, the map $\mathcal{E}$ defines an $(f,t_d,t_f)_L$-FCLMC with data protection. Hence, $r_f^L(k,t_d,t_f)\leq N_L(D_2)+r_d$. This completes the proof.
\end{proof}

\begin{theorem}\label{thm-rnltf}
 Let $u_1,u_2,\dots,u_M\in\mathbb{Z}_m^k$ and $\mathcal{E}_d:\mathbb{Z}_m^k\to \mathbb{Z}_m^{k+r_d}$ be a data encoding map with redundancy $r_d$ and $t_d$-error-correction (that is, the corresponding code has the minimum Lee distance $2t_d+1$). Then 
 \begin{align*}
     N_L(D_{\mathcal{E}_d,f}(t_f,u_1,u_2,\dots,u_M))\leq N_L(M,2(t_f-t_d)),
 \end{align*}
 where $N_L(M,2(t_f-t_d))$ is the minimum possible length of error-correcting code with $M$ codewords and minimum Lee distance $2(t_f-t_d)$.
\end{theorem}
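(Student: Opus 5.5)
The plan is to show that any error-correcting code with $M$ codewords and minimum Lee distance $2(t_f-t_d)$ is already a $D$-code for $D=D_{\mathcal{E}_d,f}(t_f,u_1,u_2,\dots,u_M)$. Taking such a code of length $N_L(M,2(t_f-t_d))$ then immediately gives $N_L(D)\leq N_L(M,2(t_f-t_d))$, by the definition of $N_L(D)$ as the smallest length of a $D$-code.

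\textbf{Step 1: bound every entry of $D$.} Fix $i\neq j$.
\begin{itemize}
\item If $f(u_i)=f(u_j)$, then $[D]_{ij}=0$.
\item If $f(u_i)\neq f(u_j)$, then $u_i\neq u_j$. Since the code $\{\mathcal{E}_d(u)\}$ has minimum Lee distance $2t_d+1$, we get $d_L(\mathcal{E}_d(u_i),\mathcal{E}_d(u_j))\geq 2t_d+1$.
\end{itemize}
In the second case this gives
\begin{align*}
[D]_{ij}=[2t_f+1-d_L(\mathcal{E}_d(u_i),\mathcal{E}_d(u_j))]^+\leq [2t_f+1-(2t_d+1)]^+=2(t_f-t_d).
\end{align*}
Here we use $t_f>t_d$, the standing assumption of the paper, so the positive part is harmless. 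Hence every off-diagonal entry of $D$ is at most $2(t_f-t_d)$.

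\textbf{Step 2: conclude.} Let $\{C_1,\dots,C_M\}$ be a code of length $N_L(M,2(t_f-t_d))$ with minimum Lee distance at least $2(t_f-t_d)$. By Step 1, for all $i\neq j$,
\begin{align*}
d_L(C_i,C_j)\geq 2(t_f-t_d)\geq [D]_{ij}.
\end{align*}
So this code is a $D$-code, which yields the claimed inequality.

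The only point needing care is the distinctness assumption on the $u_i$. The data-distance bound $2t_d+1$ requires $u_i\neq u_j$. This holds automatically when $f(u_i)\neq f(u_j)$, and it is exactly the case where the entry of $D$ can be nonzero. I therefore do not expect a genuine obstacle; the argument is a direct entrywise comparison.
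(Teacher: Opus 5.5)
Your proposal is correct and follows the paper's proof. The paper bounds each entry of $D_{\mathcal{E}_d,f}(t_f,u_1,\dots,u_M)$ by $2(t_f-t_d)$ using the minimum distance $2t_d+1$ of the data code, and then concludes that a code with $M$ codewords and minimum Lee distance $2(t_f-t_d)$ is a $D$-code; you also make explicit the points the paper leaves implicit, namely that $f(u_i)\neq f(u_j)$ forces $u_i\neq u_j$ and that $t_f>t_d$ keeps the positive part harmless.
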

\begin{proof}
Since $d_L(\mathcal{E}_d(u_i),\mathcal{E}_d(u_j))\geq 2t_d+1$ for all distinct $u_i,u_j\in \mathbb{Z}_m^k$, 
\begin{align*}
    [D_{\mathcal{E}_d,f}(t_f,u_1,u_2,\dots,u_M)]_{ij}=&[2t_f+1-d_L(\mathcal{E}_d(u_i),\mathcal{E}_d(u_j))]^+\\
    \leq & 2t_f-2t_d=2(t_f-t_d).
\end{align*}
Hence, $ N_L(D_{\mathcal{E}_d,f}(t_f,u_1,u_2,\dots,u_M))\leq N_L(M,2(t_f-t_d))$.
\end{proof}

Combining Theorem \ref{thm-rnltd}, Theorem \ref{thm-rnltf}, and Theorem \ref{nl2t}, we have the following upper bound on the optimal redundancy.

\begin{corollary}
 Let $\mathcal{E}_d:\mathbb{Z}_m^k\to \mathbb{Z}_m^{k+r_d}$ be a data encoding map with redundancy $r_d$ and $t_d$-error correction (that is, the corresponding code has the minimum Lee distance $2t_d+1$). Then
 \begin{align*}
     r_f^L(k,t_d,t_f)\leq  m^k \left \lceil \frac{2(t_f-t_d)}{\left \lfloor \frac{m}{2}\right \rfloor}\right \rceil+r_d.
 \end{align*}
\end{corollary}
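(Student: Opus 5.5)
The plan is to chain the three results quoted just before the statement, taking $M=m^k$ with $\mathbb{Z}_m^k=\{u_1,u_2,\dots,u_{m^k}\}$. First, Theorem \ref{thm-rnltd}, applied to the given data encoding $\mathcal{E}_d$ of redundancy $r_d$ and minimum Lee distance $2t_d+1$, gives
\begin{align*}
r_f^L(k,t_d,t_f)\leq N_L(D_{\mathcal{E}_d,f}(t_f,u_1,u_2,\dots,u_{m^k}))+r_d .
\end{align*}
Second, Theorem \ref{thm-rnltf} with $M=m^k$ bounds the first term on the right by $N_L(m^k,2(t_f-t_d))$. This step uses only that every entry of the CLDRM is at most $2(t_f-t_d)$, because $d_L(\mathcal{E}_d(u_i),\mathcal{E}_d(u_j))\geq 2t_d+1$ for $i\neq j$. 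Hence any code with $m^k$ codewords and minimum Lee distance $2(t_f-t_d)$ is a $D_{\mathcal{E}_d,f}$-code.

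Third, I would invoke Theorem \ref{nl2t} with $\lambda=m^k$ and $t$ replaced by $t_f-t_d$. This is the one place needing care, because Theorem \ref{nl2t} is phrased for minimum distance $2t$ and gives the factor $\left\lceil t/\lfloor m/2\rfloor\right\rceil$, whereas the corollary displays $\left\lceil 2(t_f-t_d)/\lfloor m/2\rfloor\right\rceil$. Both readings give the stated bound. Reading Theorem \ref{nl2t} literally yields
\begin{align*}
N_L(m^k,2(t_f-t_d))\leq m^k\left\lceil \frac{t_f-t_d}{\lfloor m/2\rfloor}\right\rceil,
\end{align*}
which is at most the displayed quantity.

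To be self-contained and match the displayed constant exactly, I would also give a direct construction. Take $m^k$ blocks of length $\ell=\left\lceil 2(t_f-t_d)/\lfloor m/2\rfloor\right\rceil$ each, and let codeword $C_i$ be the all-$\lfloor m/2\rfloor$ vector in block $i$ and zero elsewhere. For $i\neq j$ this gives
\begin{align*}
d_L(C_i,C_j)=2\ell\lfloor m/2\rfloor\geq 2(t_f-t_d).
\end{align*}
So the length is at most $m^k\ell$.

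Combining the three inequalities gives the corollary. The only obstacle is the parameter bookkeeping in the third step. The monotonicity remark, or the explicit block construction, removes it.
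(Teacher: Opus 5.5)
Your proof is correct and follows the paper's own argument: the corollary is obtained by chaining Theorem~\ref{thm-rnltd}, Theorem~\ref{thm-rnltf} with $M=m^k$, and Theorem~\ref{nl2t} with $\lambda=m^k$. You also observe that applying Theorem~\ref{nl2t} literally gives the sharper factor $\left\lceil (t_f-t_d)/\lfloor m/2\rfloor\right\rceil$, which implies the displayed bound; this correctly settles the parameter mismatch that the paper does not address.
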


 For $m=2$, the Lee metric coincides with the Hamming metric over $\mathbb{Z}_2$.
\begin{corollary}
    Let $m=2$, and $\mathcal{E}_d:\mathbb{Z}_2^k\to \mathbb{Z}_2^{k+r_d}$ be an encoding map that defines a linear code with parameters $[n=k+r_d,k,2t_d+1]$. Then
    \begin{align*}
        r_f^L(k,t_d,t_f)\leq  2^{k+1}(t_f-t_d)+n-k. 
    \end{align*}
\end{corollary}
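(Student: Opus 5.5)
This follows by specializing the preceding corollary to $m=2$, with one sharpening of the count of codewords. The ingredients are Theorem \ref{thm-rnltd}, Theorem \ref{thm-rnltf} and Theorem \ref{nl2t}. Since $\mathcal{E}_d$ defines an $[n,k,2t_d+1]$ linear code, it is a data encoding with redundancy $r_d=n-k$ and minimum (Lee $=$ Hamming) distance $2t_d+1$. Theorem \ref{thm-rnltd} then gives
\begin{align*}
r_f^L(k,t_d,t_f)\leq N_L(D_{\mathcal{E}_d,f}(t_f,u_1,\dots,u_{2^k}))+n-k .
\end{align*}
Theorem \ref{thm-rnltf} with $M=2^k$ bounds the first term by $N_L(2^k,2(t_f-t_d))$. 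It remains to show
\begin{align*}
N_L(2^k,2(t_f-t_d))\leq 2^{k+1}(t_f-t_d).
\end{align*}

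A naive use of Theorem \ref{nl2t} with $\lfloor m/2\rfloor=1$ gives $N_L(\lambda,2t)\leq \lambda\lceil t\rceil=\lambda t$ with $t=t_f-t_d$. That yields the length bound $2^k(t_f-t_d)$, which is even better than claimed. The statement as written is consistent with the preceding corollary: there the numerator is $2(t_f-t_d)$, i.e. it applies the bound $\lambda\lceil 2t/\lfloor m/2\rfloor\rceil$ for distance $2t$. At $m=2$ this reads $2^k\cdot 2(t_f-t_d)=2^{k+1}(t_f-t_d)$. So the cleanest route is to substitute $m=2$ into the preceding corollary, which directly gives
\begin{align*}
r_f^L(k,t_d,t_f)\leq 2^k\left\lceil \tfrac{2(t_f-t_d)}{1}\right\rceil+r_d=2^{k+1}(t_f-t_d)+n-k .
\end{align*}

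To be safe against any doubt about the form of Theorem \ref{nl2t}, I would also give a self-contained construction of a binary code with $2^k$ codewords, pairwise distance at least $2(t_f-t_d)$, and length $2^{k+1}(t_f-t_d)$. Split the coordinates into $2^k$ blocks of length $2(t_f-t_d)$. Let codeword $i$ be all ones on block $i$ and zero elsewhere. Any two distinct codewords then differ in exactly $4(t_f-t_d)\geq 2(t_f-t_d)$ positions, so this code has the required length and distance. This completes the plan.

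There is no real obstacle. The only care needed is to check that the hypotheses of Theorem \ref{thm-rnltd} and Theorem \ref{thm-rnltf} hold for the linear code, with redundancy $n-k$ and minimum distance $2t_d+1$. One must also note that the entries of the CFLDM are bounded by $2(t_f-t_d)$, so an ordinary code with that minimum distance is a $D$-code.
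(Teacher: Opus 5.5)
Your proposal is correct and follows the paper's route: the corollary is the preceding bound $r_f^L(k,t_d,t_f)\leq m^k\lceil 2(t_f-t_d)/\lfloor m/2\rfloor\rceil+r_d$ (from Theorems \ref{thm-rnltd}, \ref{thm-rnltf} and \ref{nl2t}) evaluated at $m=2$ with $r_d=n-k$. Your side observation is also right: applying Theorem \ref{nl2t} with $t=t_f-t_d$ (equivalently, blocks of length $t_f-t_d$ in your construction) already gives the sharper $2^k(t_f-t_d)+n-k$, so the stated inequality holds but is loose.
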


\subsection{A Plotkin-like bound for FCLMC with data protection}

A general Plotkin-type bound for function-correcting codes under the Lee metric was established in \cite[Lemma 4.1]{verma2025FCLMC}. In the previous subsection, we provided several estimates of the optimal redundancy in terms of the distance requirement matrices. However, applying this bound requires knowledge of all pairwise Lee distances between the codewords of an irregular Lee distance code. 
 To circumvent this requirement, we derive an alternate form of Plotkin-type lower bound on the optimal redundancy of FCLMCs with data protection that depends only on the prescribed distance parameters $d_d$ and $d_f$, together with the underlying function. It should be noted that the lower bound presented in Theorem \ref{rlower} applies only to functions whose image cardinality does not exceed $(m-1)k$.

\begin{theorem}[Lower Bound]
Let $f:\mathbb{Z}_m^k\to \text{Im}(f)$ be a function. Then, for any $(f,d_d,d_f)$-FCLMC over $\mathbb{Z}_m$, we have 
\begin{align*}
r_f^L(k,d_d,d_f)\geq \frac{(\ell-1)d_d+(m^k-\ell)d_f}{Sm^{k-1}}-k,
\end{align*}
where $\ell=\max_{e\in \mathrm{Im}(f)}|f^{-1}(e)|$, $d_f>d_d$, and 
\begin{align*}
    S=\begin{cases}
        \frac{m^2}{4} & \text{ if } m \text{ is even}\\
        \frac{m^2-1}{4}& \text{ if } m \text{ is odd}.\\
    \end{cases}
\end{align*}
\end{theorem}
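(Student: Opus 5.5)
The plan is a Plotkin double-counting argument on the full code $C=\{\mathcal{E}(u): u\in\mathbb{Z}_m^k\}$ of length $n=k+r$, where $r=r_f^L(k,d_d,d_f)$ and $M=m^k$ codewords. The quantity to count is
$$\Sigma=\sum_{u\in\mathbb{Z}_m^k}\sum_{v\in\mathbb{Z}_m^k} d_L(\mathcal{E}(u),\mathcal{E}(v)),$$
the sum over all ordered pairs.

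First, the upper bound. Applying Lemma \ref{Csum} to $C$ with $M=m^k$ and length $k+r$ gives
$$\Sigma=\frac{S\,m^{2k}(k+r)}{m}=S\,m^{2k-1}(k+r).$$
In fact only the inequality $\le$ is needed. It follows coordinatewise: for each coordinate, $\sum_{a,b}w_L(a-b)$ over the multiset of column entries is at most $M^2 S/m$, by Lemma \ref{sumS} together with the standard balancing argument.

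Second, the lower bound on $\Sigma$ from the definition of an $(f,d_d,d_f)_L$-FCLMC. Fix $u$. Among the $m^k-1$ vectors $v\neq u$, those with $f(v)=f(u)$ number exactly $|f^{-1}(f(u))|-1\le \ell-1$. Each of these contributes at least $d_d$. The remaining ones, at least $m^k-\ell$ of them, satisfy $f(v)\neq f(u)$ and contribute at least $d_f$. Since $d_f>d_d$, the sum over $v$ is minimized when the number of same-value partners is as large as possible. So for every $u$,
$$\sum_{v} d_L(\mathcal{E}(u),\mathcal{E}(v))\ \ge\ (\ell-1)d_d+(m^k-\ell)d_f .$$
Summing over the $m^k$ choices of $u$ gives $\Sigma\ge m^k\big[(\ell-1)d_d+(m^k-\ell)d_f\big]$.

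Finally, combining the two bounds gives
$$S\,m^{2k-1}(k+r)\ \ge\ m^k\big[(\ell-1)d_d+(m^k-\ell)d_f\big].$$
Dividing by $S\,m^{2k-1}$ yields
$$k+r\ \ge\ \frac{(\ell-1)d_d+(m^k-\ell)d_f}{S\,m^{k-1}}.$$
Rearranging gives the stated bound.

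The main obstacle is making the upper bound on $\Sigma$ rigorous. Lemma \ref{Csum} as quoted is an equality that holds only for suitably balanced codes, such as a full coordinate projection. If it is read literally it may not apply to an arbitrary $C$, so I would justify the inequality $\Sigma\le S M^2 n/m$ directly. In a single coordinate, let $n_a$ be the number of codewords with entry $a$. Then the per-coordinate sum is $\sum_{a,b}n_a n_b\, w_L(a-b)$. This is a quadratic form in the $n_a$, and it is maximized under $\sum_a n_a=M$ at the uniform distribution. Indeed, the matrix $(w_L(a-b))$ is circulant with eigenvalues that are nonpositive except the one on the all-ones vector. At the uniform point the value is $M^2S/m$ by Lemma \ref{sumS}. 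The remaining steps are routine.
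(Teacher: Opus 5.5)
Your proposal is correct and is essentially the paper's proof. Both arguments take the per-codeword estimate $\sum_{v} d_L(\mathcal{E}(u),\mathcal{E}(v))\ge (\ell-1)d_d+(m^k-\ell)d_f$, sum it over all $m^k$ codewords, and compare the result with the Plotkin-type upper bound $S\,m^{2k-1}(k+r)$ on the total pairwise distance. The only difference is that the paper cites this upper bound as Wyner's Eq.~(15), while you prove it coordinatewise via the circulant matrix $(w_L(a-b))$, whose nontrivial eigenvalues are indeed nonpositive. You are also right that Lemma~\ref{Csum} should be read as an inequality, not an equality, for arbitrary codes.
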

\begin{proof}
Let $\mathcal{E}:\mathbb{Z}_m^k\to \mathbb{Z}_m^{k+r}$ be an encoding map that defines $(f,d_d,d_f)_L$-FCLMC with redundancy $r$. Then $C=\{\mathcal{E}(u):\ u\in \mathbb{Z}_m^k\}$ is an $(f,d_d,d_f)_L$-FCLMC of length $k+r$ with $m^k$ codewords. For a codeword $c\in C$, denote $u_c=f^{-1}(c)\in \mathbb{Z}_m^k$ and $f(u_c)=e_c\in \mathrm{Im}(f)$. Observe that $d_L(c,c')\geq d_d$ for all $c'\in C$ with $e_c=e_{c'}$ and $d_L(c,c'')\geq d_f$ for all $c''\in C$ with $e_c\neq e_{c''}$. Therefore,
\begin{align*}
    J(c)=\sum_{c'\in C, c\neq c'}d_L(c,c')&\geq (|f^{-1}(e_c)|-1)d_d+(m^k-|f^{-1}(e_c)|)d_f\\
    &=m^kd_f-d_d-(d_f-d_d)|f^{-1}(e_c)|.
\end{align*}
As $d_f>d_d$ and $\ell=\max_{e\in \mathrm{Im}(f)}|f^{-1}(e)|$, for all $c\in C$, we have
\begin{align*}
 J(c)\geq m^kd_f-d_d-(d_f-d_d)\ell=(\ell-1)d_d+(m^k-\ell)d_f.   
\end{align*}
Consequently, 
\begin{align}\label{eq1-plot}
 \sum_{(c,c')\in C^2, c\neq c'} d_L(c,c')\geq \sum_{c\in C} J(c)\geq m^k((\ell-1)d_d+(m^k-\ell)d_f).    
\end{align}
Also, from \cite[Eq. (15)]{Wyner1968}, we have 
    \begin{align}\label{eq2-plot}
        \sum_{(c,c')\in C^2, c\neq c'}d_L(c,c')&
        \leq Sm^{2k-1}(k+r),   
    \end{align}
where \begin{align*}
    S=\begin{cases}
        \frac{m^2}{4}, & \text{ if } m \text{ is even}\\
        \frac{m^2-1}{4}, & \text{ if } m \text{ is odd.}
    \end{cases}
\end{align*}
On combining Eqs. (\ref{eq1-plot}) and (\ref{eq2-plot}), we get 
\begin{align*}
    &Sm^{2k-1}(k+r)\geq m^k((\ell-1)d_d+(m^k-\ell)d_f),\\
    \text{i.e.,}\hspace{0.5 cm }& r\geq \frac{(\ell-1)d_d+(m^k-\ell)d_f}{Sm^{k-1}}-k.
\end{align*}
This completes the proof.
\end{proof}

\begin{corollary}    
Let $f:\mathbb{Z}_2^k\to \text{Im}(f)$ be a function. Then, for any $(f,d_d,d_f)$-FCLMC over $\mathbb{Z}_2$, we have 
\begin{align*}
r_f^L(k,t_d,t_f)\geq \frac{(\ell-1)d_d+(2^k-\ell)d_f}{2^{k-1}}-k,
\end{align*}
where $\ell=\max_{e\in \mathrm{Im}(f)}|f^{-1}(e)|$ and $d_f>d_d$.
\end{corollary}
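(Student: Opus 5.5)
This corollary should follow by specializing the preceding Plotkin-type lower bound to $m=2$. The plan is to substitute $m=2$ into that theorem and check that the constant $S$ and the normalizing power of $m$ reduce correctly.

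Over $\mathbb{Z}_2$ the Lee weight of $1$ equals $1$, so the Lee metric coincides with the Hamming metric. The hypotheses of the general theorem are then met verbatim: $f:\mathbb{Z}_2^k\to \mathrm{Im}(f)$ is arbitrary, $\ell=\max_{e\in\mathrm{Im}(f)}|f^{-1}(e)|$, and $d_f>d_d$. Since $m=2$ is even, Lemma \ref{sumS} gives
\begin{align*}
S=\frac{m^2}{4}=1,
\end{align*}
which can be checked directly: $\sum_{\beta\in\mathbb{Z}_2}d_L(\alpha,\beta)=0+1=1$. The denominator therefore becomes
\begin{align*}
Sm^{k-1}=2^{k-1},
\end{align*}
and the inequality reads
\begin{align*}
r\geq \frac{(\ell-1)d_d+(2^k-\ell)d_f}{2^{k-1}}-k.
\end{align*}
Because this holds for every $(f,d_d,d_f)$-FCLMC, it holds in particular for the optimal redundancy.

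The last step is purely notational. By the convention following Definition \ref{def-fclmc}, $r_f^L(k,d_d,d_f)$ and $r_f^L(k,t_d,t_f)$ denote the same quantity, with $t_d=\lfloor (d_d-1)/2\rfloor$ and $t_f=\lfloor (d_f-1)/2\rfloor$. Writing the left side as $r_f^L(k,t_d,t_f)$ while keeping $d_d,d_f$ on the right therefore gives exactly the stated bound.

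There is no real obstacle here. The only point requiring care is the value of $S$ for $m=2$. One could also confirm the Wyner upper estimate $\sum_{c\neq c'}d_L(c,c')\le 2^{2k-2}(k+r)$ directly: each coordinate contributes at most $2\cdot a(2^k-a)\le 2^{2k-1}$ to the ordered-pair sum, where $a$ is the number of codewords with a $1$ in that coordinate. This matches $Sm^{2k-1}=2^{2k-1}$ per coordinate as used in the theorem's proof.
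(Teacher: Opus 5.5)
Your proposal is correct and follows the paper's route: the corollary is the Plotkin-type theorem with $m=2$, where $S=m^2/4=1$ makes the denominator $Sm^{k-1}$ equal to $2^{k-1}$, and the left side is read via the paper's convention that $r_f^L(k,d_d,d_f)$ and $r_f^L(k,t_d,t_f)$ denote the same quantity. One slip in your optional aside: over ordered pairs, which is how the Wyner estimate is used in the theorem's proof, the per-coordinate bound $2a(2^k-a)\le 2^{2k-1}$ gives $\sum_{c\neq c'}d_L(c,c')\le 2^{2k-1}(k+r)$; the figure $2^{2k-2}(k+r)$ is the unordered-pair version, and combining it with the ordered-pair lower estimate would wrongly double the bound.
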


\section{FCLMCs with data protection for specific functions}
In this section, we investigate the construction of function-correcting Lee metric codes (FCLMCs) with data protection for several important classes of functions. We derive general lower and upper bounds on the optimal redundancy and provide constructions of such codes.

\subsection{ Locally bounded functions}
\begin{definition}
    Let $u\in \mathbb{Z}_m^k$. A function ball centered at $u$ of radius $\rho$ is defined as
    $$B_L^f(u,\rho)=\{f(v)\mid v\in B_L(u,\rho)\}.$$
\end{definition}

\begin{definition}
    A function $f:\mathbb{Z}_m^k\to Im(f)$ is said to be a locally $(\lambda,\rho)_L$-function if $\mid B_L^f(u,\rho)\mid\leq \lambda$ for all $u\in \mathbb{Z}_m^k$.
\end{definition}
When $\lambda=2$, we say the function is a locally binary Lee function. Now, we derive an upper bound on the optimal redundancy for locally binary Lee functions. 

\begin{theorem}[Upper Bound]\label{locall-2r}
    Let $f:\mathbb{Z}_m^k\to \text{Im}(f)$ be a locally $(2,2t_f)_L$-function and $\mathcal{E}_d:\mathbb{Z}_m^k\to \mathbb{Z}_m^{k+r_d}$ be a data encoding map with redundancy $r_d$ and $t_d$-error-correction (that is, corresponding code has minimum Lee distance $2t_d+1$). Then for any $(f,t_d,t_f)$-FCLMC, we have 
    \begin{align*}
        r_f^L(k,t_d,t_f)\leq r_d+ \frac{2(t_f-t_d)}{\left \lfloor \frac{m}{2}\right \rfloor}.
    \end{align*}
\end{theorem}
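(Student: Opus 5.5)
Following the general construction of Section~\ref{FCLMC}, I would fix the data encoding $\mathcal{E}_d$ and append to each $\mathcal{E}_d(u)$ a short redundancy vector that depends on $u$ only through a binary label. By Theorem~\ref{thm-rnltd} it suffices to exhibit a $D_{\mathcal{E}_d,f}(t_f,u_1,\dots,u_{m^k})$-code of length $\left\lceil \frac{2(t_f-t_d)}{\lfloor m/2\rfloor}\right\rceil$ (the ceiling is implicit in the stated bound).

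\textbf{Step 1 (which entries matter).} The entry $[D_{\mathcal{E}_d,f}]_{ij}$ can be nonzero only if $f(u_i)\neq f(u_j)$ and $d_L(\mathcal{E}_d(u_i),\mathcal{E}_d(u_j))\le 2t_f$. Since $\mathcal{E}_d(u)=(u,p(u))$, we have $d_L(u_i,u_j)\le d_L(\mathcal{E}_d(u_i),\mathcal{E}_d(u_j))\le 2t_f$. So every nonzero entry involves a pair with $u_j\in B_L(u_i,2t_f)$. By Theorem~\ref{thm-rnltf}, each such entry is at most $2(t_f-t_d)$, because the data code has minimum distance $2t_d+1$.

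\textbf{Step 2 (binary labelling).} I would assign each $u$ a label $c(u)\in\{0,1\}$ such that $c(u)\neq c(v)$ whenever $f(u)\neq f(v)$ and $d_L(u,v)\le 2t_f$. I would then map label $0$ to $x_0=0^{s}$ and label $1$ to $x_1=(\lfloor m/2\rfloor,\dots,\lfloor m/2\rfloor)$, with $s=\left\lceil \frac{2(t_f-t_d)}{\lfloor m/2\rfloor}\right\rceil$. These two words have $d_L(x_0,x_1)\ge 2(t_f-t_d)$, which is the two-word code from Theorem~\ref{nl2t}. With this, every nonzero CLDRM requirement is met, and the encoding $\mathcal{E}(u)=(\mathcal{E}_d(u),x_{c(u)})$ gives redundancy $r_d+s$.

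\textbf{Step 3 (main obstacle: a consistent labelling).} The difficult step is showing that such a labelling exists. Locally binary only tells us that each ball $B_L(u,2t_f)$ meets at most two function values. It does not say which value gets which label, and a global 2-colouring could fail along odd cycles of values. I would try the following.
\begin{enumerate}
\item Build the graph whose vertices are the function values, with $e\sim e'$ if some $u,v$ with $f(u)=e$, $f(v)=e'$ satisfy $d_L(u,v)\le 2t_f$.
\item Argue that this graph is bipartite, or else colour message vectors directly rather than values, using that $B_L(u,2t_f)$ contains at most two values.
\end{enumerate}
This mirrors the argument for locally binary functions in \cite{verma2025FCLMC}. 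If no global 2-colouring is available, the fallback is a local argument. Since I only need $d_L(\mathcal{E}(u),\mathcal{E}(v))\ge 2t_f+1$ for close pairs with different values, I could choose the labels $c(u)=\mathbf{1}[f(u)=\phi(u)]$ for a suitably chosen representative value $\phi$. I would verify pairwise that $c(u)\neq c(v)$ whenever $f(u)\neq f(v)$ and $d_L(u,v)\le 2t_f$, which is where the locally $(2,2t_f)$ hypothesis (rather than $(2,t_f)$) would be used.
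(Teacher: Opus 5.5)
\textbf{Gap: the labelling in Step 3 is never constructed, and it is the whole content of the theorem.} Your Steps 1 and 2 match the paper's construction exactly: append $0^{N}$ or $(\lfloor m/2\rfloor,\dots,\lfloor m/2\rfloor)$ to $\mathcal{E}_d(u)$ according to a binary label. Your ceiling is, if anything, more careful than the paper, which tacitly assumes $\lfloor m/2\rfloor$ divides $2(t_f-t_d)$. But you leave open whether a suitable labelling exists.

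Your first route cannot work in general, because the graph on function values need not be bipartite. Take $k=1$ and $m=3(4t_f+1)$, and let $f$ take values $a,b,c$ on three consecutive arcs of $\mathbb{Z}_m$, each of length $4t_f+1$. Every ball $B_L(u,2t_f)$ is an interval of $4t_f+1$ consecutive residues. Meeting all three arcs would require length at least $4t_f+3$, so each ball meets at most two arcs and $f$ is locally $(2,2t_f)_L$. Yet $a\sim b$, $b\sim c$, and $c\sim a$ (across $m-1$ and $0$), so the value graph is a triangle. Labels must therefore be attached to message vectors, not to values. Your fallback $c(u)=\mathbf{1}[f(u)=\phi(u)]$ has the right shape, but you never say what $\phi$ is or why the pairwise check succeeds.

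\textbf{The missing idea.} Let $\phi(u)$ depend only on the set $B_L^f(u,2t_f)$. The paper takes $\phi(u)=\max B_L^f(u,2t_f)$ for a fixed total order on $\mathrm{Im}(f)$.

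Suppose $f(u)\neq f(v)$ and $d_L(u,v)\le 2t_f$. Then $v\in B_L(u,2t_f)$ and $u\in B_L(v,2t_f)$, so both $B_L^f(u,2t_f)$ and $B_L^f(v,2t_f)$ contain $\{f(u),f(v)\}$. Each has at most two elements, so both equal $\{f(u),f(v)\}$. Hence $\phi(u)=\phi(v)=\max\{f(u),f(v)\}$. This equals exactly one of $f(u),f(v)$, so $c(u)\neq c(v)$.

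This is exactly where radius $2t_f$ rather than $t_f$ is needed. It makes the two endpoints see identical image sets, so they make the same choice of representative. With this observation in place of your Step 3, your argument is complete and coincides with the paper's.
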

\begin{proof}
 Denote    $N=\frac{2(t_f-t_d)}{\left \lfloor \frac{m}{2}\right \rfloor}$. Define an encoding map $\mathcal{E}:\mathbb{Z}_m^k\to \mathbb{Z}_m^{k+r_d+N}$ as follows:
 \begin{align*}
 \mathcal{E}(u)=(\mathcal{E}_d(u),C_u) \ \forall u\in \mathbb{Z}_m^k,    
 \end{align*}
 where 
 \begin{align*}
     C_u=\begin{cases}
        \underbrace{\left \lfloor \frac{m}{2}\right \rfloor, \left \lfloor \frac{m}{2}\right \rfloor, \dots, 
        \left \lfloor \frac{m}{2}\right \rfloor}_{N-times} & \text{ if } f(u)=\max B_L^f(u,2t_f),\\
        \underbrace{0,0,\dots,0}_{N-times} & \text{ otherwise}.
     \end{cases}
 \end{align*}
 We show that $\mathcal{E}$ defines an $(f,t_d,t_f)$-FCLMC with redundancy $r_d+N$.  For any $u,v\in \mathbb{Z}_m^k$ with $u\neq v$, we have 
 \begin{align*}
     d_L(\mathcal{E}(u),\mathcal{E}(v))=d_L(\mathcal{E}_d(u),\mathcal{E}_d(v))+d_L(C_u,C_v)\geq 2t_d+1.
 \end{align*}
 Now, let $u,v \in\mathbb{Z}_m^k $ such that $f(u)\neq f(v)$. Then we have the following two cases.
 \begin{enumerate}
     \item If $d_L(u,v)\geq 2t_f+1$, then 
     \begin{align*}
         d_L(\mathcal{E}(u),\mathcal{E}(v))=d_L(\mathcal{E}_d(u),\mathcal{E}_d(v))+d_L(C_u,C_v)\geq 2t_f+1.
     \end{align*}
     \item If $d_L(u,v)\leq 2t_f$, then $f(v)\in B_L^f(u,2t_f)$ and we have $d_L(C_u,C_v)=N\left \lfloor \frac{m}{2}\right \rfloor=2(t_f-t_d)$. Thus
     \begin{align*}
        d_L(\mathcal{E}(u),\mathcal{E}(v))=d_L(\mathcal{E}_d(u),\mathcal{E}_d(v))+d_L(C_u,C_v)\geq 2t_d+1+N \left \lfloor \frac{m}{2}\right \rfloor=2t_f+1.  
     \end{align*}
 \end{enumerate}
\end{proof}

\begin{corollary}
    Let $\mathcal{E}_d:\mathbb{Z}_2^k \to \mathbb{Z}_2^{k+r_d}$ be a data encoding that gives a systematic linear code with parameters $[n=k+r_d,k,2t+d+1]$. Then for any locally $(2,2t_f)$-function, we have 
    \begin{align*}
        r_f^H(k,t_d,t_f)\leq n-k+2(t_f-t_d).
    \end{align*}
\end{corollary}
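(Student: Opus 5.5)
This corollary is the specialization of Theorem~\ref{locall-2r} to $m=2$. Over $\mathbb{Z}_2$ the Lee metric is the Hamming metric, so $r_f^L$ and $r_f^H$ coincide and the optimal redundancy in the statement is the quantity that theorem bounds. The hypothesis on $\mathcal{E}_d$ presumably means minimum distance $2t_d+1$; the ``$2t+d+1$'' reads as a typo. I will take the code $\{\mathcal{E}_d(u)\}$ to be a systematic $[n,k,2t_d+1]$ code, so that $\mathcal{E}_d(u)=(u,p(u))$ with redundancy $r_d=n-k$ and minimum Hamming distance $2t_d+1$, and the corresponding error-correction capability is $t_d$. These are exactly the hypotheses on the data encoding map in Theorem~\ref{locall-2r}.

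Next, for $m=2$ we have $\left\lfloor \frac{m}{2}\right\rfloor=1$. The quantity $N=\frac{2(t_f-t_d)}{\lfloor m/2\rfloor}$ from that theorem therefore becomes the integer $2(t_f-t_d)$, so no rounding issue arises. Concretely, the construction in the proof of Theorem~\ref{locall-2r} appends to $\mathcal{E}_d(u)$ either the all-ones word or the all-zeros word of length $2(t_f-t_d)$. The choice depends on whether $f(u)$ is the larger of the at most two values in $B_L^f(u,2t_f)$.

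The verification is the same as in the theorem:
\begin{itemize}
\item[(a)] Distinct messages are at distance at least $2t_d+1$, because of the data part alone.
\item[(b)] Messages $u,v$ with $f(u)\neq f(v)$ and $d(u,v)\ge 2t_f+1$ are at distance at least $2t_f+1$, because the systematic part contains $u$ and $v$.
\item[(c)] Messages $u,v$ with $f(u)\neq f(v)$ and $d(u,v)\le 2t_f$ receive complementary tails. This holds because the function is locally binary, and it gives total distance at least $2t_d+1+2(t_f-t_d)=2t_f+1$.
\end{itemize}

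The only delicate point is the consistency of the labeling in case (c). I need $f(u)\neq f(v)$ to force exactly one of $u,v$ to get the all-ones tail. The theorem's proof uses $\max B_L^f(u,2t_f)$, which is taken at $u$, while the label of $v$ uses the ball centered at $v$. I would handle this as in the theorem, or alternatively fix a global order on $\mathrm{Im}(f)$ and argue via local binarity. Either way, the corollary follows by substituting $m=2$ and $r_d=n-k$ into Theorem~\ref{locall-2r}, which gives $r_f^H(k,t_d,t_f)\le n-k+2(t_f-t_d)$.
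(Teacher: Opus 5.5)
Your proposal is correct and matches the paper's proof, which consists only of substituting $m=2$ into Theorem~\ref{locall-2r}, so that $\lfloor m/2\rfloor=1$ and $r_d=n-k$. The labeling concern you raise is harmless: if $f(u)\neq f(v)$ and $d(u,v)\le 2t_f$, local binarity forces $B_L^f(u,2t_f)=B_L^f(v,2t_f)=\{f(u),f(v)\}$, so exactly one of $u,v$ receives the all-ones tail.
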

\begin{proof}
    We get the desired result by putting $m=2$ in Theorem \ref{locall-2r}.
\end{proof}

\subsection{ The Lee weight function}
The Lee weight function $f:\mathbb{Z}_m^k\to \{0,1,2,\dots,k \left \lfloor \frac{m}{2}\right \rfloor\}$ is defined as $f(u)=w_L(u)$ for all $u\in\mathbb{Z}_m^k$. We derive bounds on the optimal redundancy. Observe that $|\mathrm{Im}({f})|=k\lfloor \frac{m}{2}\rfloor+1\leq (m-1)k$ for $m,k\geq 3$.  We have a lower bound on the optimal redundancy from Theorem \ref{rlower}.
\begin{corollary}
 Let $f(u)=w_L(u)$ for all $u\in \mathbb{Z}_m^k$ be the Lee weight function. For $m\geq 3$ and $k\geq 3$, we have
 \begin{align*}
     r_f^L(k,t_d,t_f)\geq \left \lceil \frac{4t_f+2t_d-2m+3}{m}\right \rceil.
 \end{align*}
\end{corollary}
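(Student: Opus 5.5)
This corollary will follow directly from Theorem \ref{rlower}. That theorem needs two hypotheses on $f$: $|\mathrm{Im}(f)|\geq 2$ and $|\mathrm{Im}(f)|\leq (m-1)k$. So the plan is to check both for the Lee weight function $f(u)=w_L(u)$ when $m\geq 3$ and $k\geq 3$, and then quote the bound.

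\textbf{Step 1 (image size).} Every weight between $0$ and $k\lfloor m/2\rfloor$ is attained. Indeed, any integer $w$ in this range can be written as $w=q\lfloor m/2\rfloor+s$ with $0\leq s<\lfloor m/2\rfloor$ and $q\leq k$. Take $u$ with $q$ coordinates equal to $\lfloor m/2\rfloor$ and, if $q<k$, one further coordinate equal to $s$. Then $w_L(u)=w$, so $|\mathrm{Im}(f)|=k\lfloor m/2\rfloor+1$.

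\textbf{Step 2 (the two inequalities).} The lower bound is immediate: since $k\geq 1$ and $\lfloor m/2\rfloor\geq 1$, we get $|\mathrm{Im}(f)|\geq 2$. For the upper bound we need
\begin{align*}
k\left\lfloor \frac{m}{2}\right\rfloor+1\leq (m-1)k,
\end{align*}
which is equivalent to $k\left(m-1-\lfloor m/2\rfloor\right)\geq 1$. Now $m-1-\lfloor m/2\rfloor=\lceil m/2\rceil-1$, and this is at least $1$ once $m\geq 3$. Hence the left-hand side is at least $k\geq 1$. In fact $k\geq 1$ already suffices here, so the stated assumption $k\geq 3$ is more than is needed.

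\textbf{Step 3 (conclusion).} With both hypotheses verified, Theorem \ref{rlower} applies and gives exactly the claimed bound $r_f^L(k,t_d,t_f)\geq \left\lceil \frac{4t_f+2t_d-2m+3}{m}\right\rceil$.

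No step is a real obstacle. The only points needing care are the surjectivity argument in Step 1 and the observation that $m\geq 3$ is genuinely required. For $m=2$ the upper inequality fails: there $|\mathrm{Im}(f)|=k+1>k=(m-1)k$, so Theorem \ref{rlower} does not apply.
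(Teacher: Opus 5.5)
Your proposal is correct and follows the paper's route. The paper likewise observes that $|\mathrm{Im}(f)|=k\lfloor m/2\rfloor+1\leq (m-1)k$ and invokes Theorem \ref{rlower}; your Steps 1--2 supply the verification it leaves implicit, and your remark that $k\geq 1$ already suffices (because $\lceil m/2\rceil-1\geq 1$ for $m\geq 3$) is also right.
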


Next, we present an upper bound on the optimal redundancy.

\begin{theorem}[Upper Bound]\label{lee-w-r}
    Let $f:\mathbb{Z}_m^k\to \{0,1,\dots,k\left \lfloor \frac{m}{2}\right \rfloor\}$ be the Lee weight function defined by $f(u)=w_L(u)$ for all $u\in \mathbb{Z}_m^k$, and  $\mathcal{E}_d:\mathbb{Z}_m^k\to \mathbb{Z}_m^{k+r_d}$ be a systematic data encoding map (first $k$ coordinates correspond to message) with $t_d$-error-correction. Then
    $$r_f^L(k,t_d,t_f)\leq r_d+N_L(2t_f+1,2(t_f-t_d)),$$
    where $N_L(\lambda,d)$ is the minimum length of an error-correcting code with $\lambda$ codewords and minimum Lee distance $d$.
\end{theorem}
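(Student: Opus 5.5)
We will build the code explicitly, following the two-step general construction and the pattern of Theorem \ref{thm-rnltd}. Put $\lambda=2t_f+1$ and let $\{c_0,c_1,\dots,c_{2t_f}\}\subseteq\mathbb{Z}_m^{N}$ be a code with $N=N_L(2t_f+1,2(t_f-t_d))$ and pairwise Lee distances at least $2(t_f-t_d)$. The encoding is
\begin{align*}
\mathcal{E}(u)=(\mathcal{E}_d(u),c_{w_L(u)\bmod (2t_f+1)}),
\end{align*}
so each weight class modulo $2t_f+1$ gets its own redundancy codeword. The total redundancy is then $r_d+N$, which matches the claimed bound.

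First we check data protection. For distinct $u,v$,
\begin{align*}
d_L(\mathcal{E}(u),\mathcal{E}(v))\geq d_L(\mathcal{E}_d(u),\mathcal{E}_d(v))\geq 2t_d+1,
\end{align*}
as in the proof of Theorem \ref{thm-rnltd}.

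Next we check function protection. Let $w_L(u)\neq w_L(v)$ and split into two cases.
\begin{enumerate}
\item Suppose $w_L(u)\not\equiv w_L(v)\pmod{2t_f+1}$. Then the appended parts are distinct codewords, so they contribute at least $2(t_f-t_d)$. Adding the contribution $2t_d+1$ from $\mathcal{E}_d$ gives at least $2t_f+1$.
\item Suppose $w_L(u)\equiv w_L(v)\pmod{2t_f+1}$ with $w_L(u)\neq w_L(v)$. Then $|w_L(u)-w_L(v)|\geq 2t_f+1$. We use the triangle inequality for the Lee weight, coordinatewise $|w_L(a)-w_L(b)|\leq w_L(a-b)$. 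Summing over coordinates gives
\begin{align*}
d_L(u,v)\geq |w_L(u)-w_L(v)|\geq 2t_f+1.
\end{align*}
Since $\mathcal{E}_d$ is systematic, its first $k$ coordinates are $u$ and $v$. Hence $d_L(\mathcal{E}_d(u),\mathcal{E}_d(v))\geq d_L(u,v)\geq 2t_f+1$, and therefore $d_L(\mathcal{E}(u),\mathcal{E}(v))\geq 2t_f+1$.
\end{enumerate}

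The only step that needs care is the second case. It relies on the coordinatewise inequality $|w_L(a)-w_L(b)|\leq w_L(a-b)$ in $\mathbb{Z}_m$. This follows from the representation (\ref{alternatelee}), since $w_L(a)=\min\{a,m-a\}$ is $1$-Lipschitz along the cycle $\mathbb{Z}_m$. The case also depends on the systematic hypothesis, which lets the distance between the messages pass directly into the codewords of $\mathcal{E}_d$.

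With both conditions verified, $\mathcal{E}$ defines an $(f,t_d,t_f)_L$-FCLMC with redundancy $r_d+N_L(2t_f+1,2(t_f-t_d))$. This gives the stated upper bound.
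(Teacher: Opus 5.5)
Your proof is correct and follows the paper's argument: it uses the same encoding $\mathcal{E}(u)=(\mathcal{E}_d(u),C_{w_L(u) \bmod (2t_f+1)})$ and the same two-case check, and splitting by congruence mod $2t_f+1$ rather than by whether $|f(u)-f(v)|\le 2t_f$ is only an inessential rephrasing. You also make explicit two facts the paper uses only implicitly in the second case: the systematic hypothesis, which gives $d_L(\mathcal{E}_d(u),\mathcal{E}_d(v))\geq d_L(u,v)$, and the triangle inequality for the Lee weight.
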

\begin{proof}
Denote $n_1=k+r_d$ and $n_2=N_L(2t_f+1,2(t_f-t_d))$. Let $C$ be a code  with $2t_f+1$ codewords, length $n_2$ and minimum Lee distance $2(t_f-t_d)$. Denote the codewords of $C$  by   $C_0,C_1,\dots, C_{2t_f}$. Define an encoding 
$\mathcal{E}:\mathbb{Z}_m^k\to \mathbb{Z}_m^{n_1+n_2}$ as follows
\begin{align*}
    \mathcal{E}(u)=(\mathcal{E}_d(u),C_{f(u)\pmod{2t_f+1}}).
\end{align*}
We show that $\mathcal{E}$ defines an $(f,t_d,t_f)$-FCLMC. For $u,v\in \mathbb{Z}_m^k$ with $u\neq v$, we have 
\begin{align*}
    d_L(\mathcal{E}(u),\mathcal{E}(v)= d_L(\mathcal{E}_d(u),\mathcal{E}_d(v)+d_L(C_{f(u)\pmod{2t_f+1}},C_{f(v)\pmod{2t_f+1}})\geq 2t_d+1.
\end{align*}
Next, let $u,v\in \mathbb{Z}_m^k$ with $f(u)\neq f(v)$. \\
\textbf{Case 1} If $|f(u)-f(v)|\leq 2t_f$, then $C_{f(u)\pmod{2t_f+1}}\neq C_{f(v)\pmod{2t_f+1}}$. Thus, we have 
\begin{align*}
    d_L(\mathcal{E}(u),\mathcal{E}(v))=& d_L(\mathcal{E}_d(u),\mathcal{E}_d(v))+d_L(C_{f(u)\pmod{2t_f+1}},C_{f(v)\pmod{2t_f+1}})\\ \geq& 2t_d+1 +2(t_f-t_d)=2t_f+1.
\end{align*}
\textbf{Case 2} If $|f(u)-f(v)|\geq  2t_f+1$, then without loss of generality, we assume $f(u)>f(v)$, we get $f(u)-f(v)\geq 2t_f+1$. Also, $d_L(u,v)\geq w_L(u)-w_L(v)\geq 2t_f+1$. Hence,
\begin{align*}
    d_L(\mathcal{E}(u),\mathcal{E}(v))=& d_L(\mathcal{E}_d(u),\mathcal{E}_d(v))+d_L(C_{f(u)\pmod{2t_f+1}},C_{f(v)\pmod{2t_f+1}})\\ \geq& 2t_f+1.
\end{align*}
\end{proof}

\begin{corollary}
Let $f:\mathbb{Z}_2^k\to \{0,1,\dots,k\}$ be the Hamming weight function defined by $f(u)=w_H(u)$ for all $u\in \mathbb{Z}_2^k$, and  $\mathcal{E}_d:\mathbb{Z}_m^k\to \mathbb{Z}_m^{k+r_d}$ be a systematic data encoding map with $t_d$-error-correction that gives a linear code with parameters $[n=k+r_d,k,2t_d+1]$. Then
    $$r_f^L(k,t_d,t_f)\leq n-k+N_L(2t_f+1,2(t_f-t_d)).$$   
\end{corollary}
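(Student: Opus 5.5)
This corollary follows from Theorem \ref{lee-w-r} by specializing to $m=2$. Over $\mathbb{Z}_2$ the Lee weight of a coordinate is $w_L(a)=a$ for $a\in\{0,1\}$, since $\lfloor m/2\rfloor=1$. Hence $w_L(u)=w_H(u)$ and $d_L(u,v)=d_H(u,v)$ for all $u,v\in\mathbb{Z}_2^k$. The Lee weight function $f:\mathbb{Z}_2^k\to\{0,1,\dots,k\lfloor 2/2\rfloor\}=\{0,1,\dots,k\}$ is then exactly the Hamming weight function, and the image set matches the one in the statement.

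Next I check that the data encoding satisfies the hypotheses of Theorem \ref{lee-w-r}. Here $\mathcal{E}_d:\mathbb{Z}_2^k\to\mathbb{Z}_2^{k+r_d}$ is systematic, so its first $k$ coordinates are the message. It gives a linear $[n=k+r_d,k,2t_d+1]$ code, so the minimum distance is $2t_d+1$ in the Hamming metric, hence also in the Lee metric. This makes $\mathcal{E}_d$ a $t_d$-error-correcting systematic data encoding in the sense of that theorem. The statement writes $\mathbb{Z}_m$ for the encoding's domain and codomain; I read this as $m=2$, consistent with the Hamming setting. Linearity plays no role, and it only gives $r_d=n-k$.

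Applying Theorem \ref{lee-w-r} gives
\begin{align*}
r_f^L(k,t_d,t_f)\leq r_d+N_L(2t_f+1,2(t_f-t_d))=n-k+N_L(2t_f+1,2(t_f-t_d)).
\end{align*}
Since the two metrics coincide over $\mathbb{Z}_2$, this bound is the same as the claimed one for the Hamming-metric quantity. For the same reason, $N_L$ equals the Hamming-metric minimum length of a binary code with $2t_f+1$ codewords and minimum distance $2(t_f-t_d)$.

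There is no real obstacle. The only point needing care is that the inequality $d_L(u,v)\geq w_L(u)-w_L(v)$, used in Case 2 of the proof of Theorem \ref{lee-w-r}, still holds for $m=2$. It does, because it is the triangle inequality for the Hamming weight.
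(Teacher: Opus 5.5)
Your proposal is correct and takes the same approach as the paper, which notes that the Lee and Hamming weights coincide for $m=2$ and then invokes Theorem \ref{lee-w-r} with $r_d=n-k$. Your added checks are sound: the systematic hypothesis, reading $\mathbb{Z}_m$ as $\mathbb{Z}_2$, and the weight triangle inequality used in Case 2.
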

\begin{proof}
For $m=2$, the Lee weight and the Hamming weight coincide over $\mathbb{Z}_m^k$. The rest follows from Proposition \ref{lee-w-r}.   
\end{proof}

\subsection{Modular sum function}
A modular sum function $f_{\mathrm{ms}}:\mathbb{Z}_{m}^k\to \mathbb{Z}_{m}$ is defined as $f_{\mathrm{ms}}(u)=\sum_{i=1}^ku_i \pmod{m}$ for all $u=(u_1,u_2,\dots,u_k)\in \mathbb{Z}_{m}^k$. For the modular sum function $f_{\mathrm{ms}}$, we have $|\mathrm{Im}(f_{\mathrm{ms}})|=m$. Therefore, from Theorem \ref{rlower}, we have a lower bound on the optimal redundancy.

\begin{corollary}
 Let $f_{\mathrm{ms}}:\mathbb{Z}_{m}^k\to\mathbb{Z}_{m}$ be the modular sum function. Then 
 \begin{align*}
     r_{f_{\mathrm{ms}}}^L(k,t_d,t_f)\geq \left \lceil \frac{4t_f+2t_d-2m+3}{m}\right \rceil.
 \end{align*}
\end{corollary}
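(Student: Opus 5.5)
The plan is to obtain the corollary as a direct application of Theorem \ref{rlower} to $f=f_{\mathrm{ms}}$. That theorem needs only one hypothesis, namely $2\leq |\mathrm{Im}(f)|\leq (m-1)k$, and its conclusion is exactly the displayed bound. So the whole proof reduces to checking this hypothesis for the modular sum function.

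First, $f_{\mathrm{ms}}$ is surjective onto $\mathbb{Z}_m$: for any $a\in\mathbb{Z}_m$ the vector $(a,0,\dots,0)$ has coordinate sum $a$. Hence $|\mathrm{Im}(f_{\mathrm{ms}})|=m\geq 2$. The upper condition $m\leq (m-1)k$ holds whenever $k\geq 2$, since then $(m-1)k\geq 2m-2\geq m$ for $m\geq 2$.

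The one delicate point, which I expect to be the only obstacle, is the degenerate case $k=1$. There $(m-1)k=m-1<m$, so Theorem \ref{rlower} does not apply verbatim.

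\begin{itemize}
\item I would first try to state the corollary for $k\geq 2$, the implicit standing assumption.
\item Alternatively, I would handle $k=1$ directly. Here $f_{\mathrm{ms}}$ is the identity on $\mathbb{Z}_m$, so every pair of distinct messages has distinct function values. Take $u_1=0$, $u_2=1$, $u_3=-1$ (for $m\geq 3$). All pairwise function values differ, so every off-diagonal entry of $D_f^L(t_d,t_f,u_1,u_2,u_3)$ is of the form $2t_f+1-d_L(u_i,u_j)$.
\item With $d_L(u_1,u_2)=d_L(u_1,u_3)=1$ and $d_L(u_2,u_3)\leq 2$, the same triangle-sum estimate used in Case 1 of Theorem \ref{rlower} applies. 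It rests on the inequality $d_L(C_1,C_2)+d_L(C_2,C_3)+d_L(C_1,C_3)\leq mr$, which gives $mr\geq 6t_f-1$.
\item Since $t_f>t_d$, we have $6t_f-1\geq 4t_f+2t_d+1\geq 4t_f+2t_d-2m+3$ (using $m\geq 2$), so the stated bound still holds.
\end{itemize}

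Combining the $k\geq 2$ verification with this case gives the corollary.
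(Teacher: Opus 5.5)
Your main route is the paper's. Its proof is the single observation $|\mathrm{Im}(f_{\mathrm{ms}})|=m$ followed by an appeal to Theorem~\ref{rlower}, and it silently ignores that the hypothesis $m\le (m-1)k$ fails when $k=1$. You are right to flag this. Your direct three-point argument for $k=1$, $m\ge 3$ is sound:
\begin{itemize}
\item $d_L(1,-1)=w_L(2)\le 2$;
\item for three elements of $\mathbb{Z}_m$, the sum of their pairwise Lee distances is at most $m$, since the three cyclic arcs between them sum to $m$;
\item $t_f\ge t_d+1$ gives $6t_f-1\ge 4t_f+2t_d+1\ge 4t_f+2t_d-2m+3$.
\end{itemize}

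The gap is your closing sentence. The case $(m,k)=(2,1)$ is covered by neither part: your $k=1$ argument needs $m\ge 3$ to have three distinct messages. It cannot be covered either, because the statement is false there. For $m=2$, $k=1$, the two messages $0,1$ have distinct function values, so the only requirement is $d_L(\mathcal{E}(0),\mathcal{E}(1))\ge 2t_f+1$. The repetition encoding $\mathcal{E}(u)=(u,u,\dots,u)$ with redundancy $2t_f$ meets it. Yet the claimed bound evaluates to $\left\lceil \frac{4t_f+2t_d-1}{2}\right\rceil=2t_f+t_d$, which exceeds $2t_f$ whenever $t_d\ge 1$.

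So the restriction in your first bullet is not optional. The corollary, and its proof (yours or the paper's), needs the extra hypothesis $(m,k)\neq(2,1)$, for instance $k\ge 2$. With that hypothesis your argument is complete: Theorem~\ref{rlower} covers $k\ge 2$, and your direct argument covers $k=1$, $m\ge 3$.
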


We now present an upper bound on the optimal redundancy.
\begin{theorem}[Upper Bound]
  Let $f_{\mathrm{ms}}:\mathbb{Z}_{m}^k\to\mathbb{Z}_{m}$ be the modular sum function, and let $\mathcal{E}_d:\mathbb{Z}_m^k\to \mathbb{Z}_m^{k+r_d}$ be a data encoding map with $t_d$-error-correction. Then $$r_{f_{\mathrm{ms}}}^L(k,t_d,t_f)\leq  r_d+ N_L(m,2(t_f-t_d)).$$   
\end{theorem}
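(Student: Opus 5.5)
The construction mirrors the one in the proof of Theorem \ref{lee-w-r}, with the Lee-weight label replaced by the value of the modular sum. Set $n_1=k+r_d$ and $n_2=N_L(m,2(t_f-t_d))$. Fix a code $C=\{C_0,C_1,\dots,C_{m-1}\}\subseteq \mathbb{Z}_m^{n_2}$ with $m$ codewords and minimum Lee distance $2(t_f-t_d)$, indexed by the elements of $\mathbb{Z}_m=\mathrm{Im}(f_{\mathrm{ms}})$. Define
\begin{align*}
\mathcal{E}:\mathbb{Z}_m^k\to\mathbb{Z}_m^{n_1+n_2},\qquad \mathcal{E}(u)=(\mathcal{E}_d(u),C_{f_{\mathrm{ms}}(u)}).
\end{align*}
The redundancy of $\mathcal{E}$ is $r_d+n_2$. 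It then suffices to check the two conditions in the definition of an $(f,t_d,t_f)_L$-FCLMC with data protection.

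\textbf{Data condition.} For distinct $u,v$, the Lee distance is additive over the concatenation. Since $d_L(C_{f_{\mathrm{ms}}(u)},C_{f_{\mathrm{ms}}(v)})\geq 0$, we get
\begin{align*}
d_L(\mathcal{E}(u),\mathcal{E}(v))\geq d_L(\mathcal{E}_d(u),\mathcal{E}_d(v))\geq 2t_d+1.
\end{align*}

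\textbf{Function condition.} If $f_{\mathrm{ms}}(u)\neq f_{\mathrm{ms}}(v)$, then $u\neq v$, so the first block again contributes at least $2t_d+1$. The second block consists of two distinct codewords of $C$, so it contributes at least $2(t_f-t_d)$. Adding gives $d_L(\mathcal{E}(u),\mathcal{E}(v))\geq 2t_f+1$.

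The argument is simpler than for the Lee weight function. The image has only $m$ values, so each value gets its own codeword and no reduction modulo $2t_f+1$ is needed. In particular, no second case analogous to Case 2 of Theorem \ref{lee-w-r} arises. Equivalently, the CFLDM entries are at most $2(t_f-t_d)$ by the argument of Theorem \ref{thm-rnltf}, and the encoder above realises this bound.

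There is no real obstacle. The only point needing care is that distinct function values force distinct messages, so the data-encoding distance $2t_d+1$ is available in the function condition. This uses only that $\mathcal{E}_d$ has minimum distance $2t_d+1$; it does not require $\mathcal{E}_d$ to be systematic or linear. Finally, the fact that $\mathcal{E}$ is a valid $(f,t_d,t_f)_L$-FCLMC with redundancy $r_d+N_L(m,2(t_f-t_d))$ gives the stated bound on $r_{f_{\mathrm{ms}}}^L(k,t_d,t_f)$ by definition of optimal redundancy.
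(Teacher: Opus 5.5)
Your proposal is correct and follows the paper's proof essentially verbatim: concatenate $\mathcal{E}_d(u)$ with the codeword, indexed by $f_{\mathrm{ms}}(u)$, of an $m$-word code of minimum Lee distance $2(t_f-t_d)$, then add the block distances $2t_d+1$ and $2(t_f-t_d)$. Your indexing $C_0,\dots,C_{m-1}$ by $\mathbb{Z}_m$ is slightly cleaner than the paper's $C_1,\dots,C_m$. One small caveat on your closing remark: the distance checks indeed do not need $\mathcal{E}_d$ to be systematic, but the paper's definition formally asks for $\mathcal{E}(u)=(u,p(u))$, so the resulting map fits it literally only when $\mathcal{E}_d$ is systematic --- a point the paper's own proof also passes over.
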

\begin{proof}
 Let $n=N_L(m,2(t_f-t_d))$ and $C=(C_1,C_2,\dots,C_m)$ be a code of length $n$ with the minimum Lee distance $2(t_f-t_d)$. Define an encoding $\mathcal{E}:\mathbb{Z}_m^k\to \mathbb{Z}_m^{k+r_d+n}$ as follows:
 \begin{align*}
     \mathcal{E}(u)=(\mathcal{E}_d(u),C_i) \text{ whenever } f_{\mathrm{ms}}(u)=i.
 \end{align*}
 It is easy to see that 
 \begin{align*}
     d_L(\mathcal{E}(u),\mathcal{E}(v))\geq d_L(\mathcal{E}_d(u),\mathcal{E}_d(v))\geq 2t_d+1 \text{ for all } u\neq v
 \end{align*}
 and 
 \begin{align*}
     d_L(\mathcal{E}(u),\mathcal{E}(v))\geq d_L(\mathcal{E}_d(u),\mathcal{E}_d(v))+d_L(C_i,C_j) \geq 2t_d+1+2(t_f-t_d)=2t_f+1
 \end{align*}
 for all $u,v\in \mathbb{Z}_m^k$ with $i=f_{\mathrm{ms}}(u)\neq f_{\mathrm{ms}}(v)=j$.
 Thus, $\mathcal{E}$ defines an $(f,t_d,t_f)_L$-FCLMC. This completes the proof.
\end{proof}

\section{Lee distance graphs and strict $(f,d_d,d_f)_L$-FCLMCs}
Note that an $(f,d_d,d_f)_L$-FCLMC with $d_d=d_f$ reduces to an error-correcting code in the Lee metric. We say an $(f,d_d,d_f)_L$-FCLMC, a strict $(f,d_d,d_f)_L$-FCLMC whenever $d_f>d_d$. In this section, we investigate the existence and non-existence of strict $(f,d_d,d_f)_L$-FCLMCs. First, we define the minimum Lee distance graph corresponding to a given code over $\mathbb{Z}_m$ under the Lee metric.

\begin{definition}[ $\alpha$-Lee distance graph]
 Let $C$ be a code over $\mathbb{Z}_m$ with the minimum Lee distance $d_L(C)=d$. For an positive integer $\alpha$, the $\alpha$-Lee distance graph of $C$, denoted by $G_C^{\alpha}$, is a graph with vertex set $C$ and two distinct vertices $c_1,c_2\in C$ are adjacent if and only if $d_L(c_1,c_2)\leq \alpha$. 
\end{definition}

\begin{theorem}
Let $C$ be an $(n, m^k,d_L(C)=d)$ code over $\mathbb{Z}_m$ in the Lee metric and $f:\mathbb{Z}_m^k\to \mathrm{Im}(f)$ be a function with $|\mathrm{Im}(f)|\geq 2$. If $C$ is a strict $(f,d_d=d,d_f)_L$-FCLMC for some $d_f>d$, then the $\alpha$-Lee distance graph $G_C^{\alpha}$ of $C$ can not be connected for any $d\leq \alpha< d_f$.   
\end{theorem}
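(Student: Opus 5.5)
We argue by contradiction. Suppose $C$ is a strict $(f,d,d_f)_L$-FCLMC with $d_f>d$, and suppose that for some $\alpha$ with $d\le \alpha<d_f$ the graph $G_C^{\alpha}$ is connected. Write each codeword as $\mathcal{E}(u)$ with $u\in\mathbb{Z}_m^k$; this correspondence is a bijection since $|C|=m^k$ and the encoding is systematic. The function $f$ thereby induces a labelling of the vertices of $G_C^{\alpha}$ by $f(u)$.

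\textbf{Step 1 (labels are constant along edges).} Take two adjacent vertices $\mathcal{E}(u),\mathcal{E}(v)$ of $G_C^{\alpha}$. By the definition of the graph, $d_L(\mathcal{E}(u),\mathcal{E}(v))\le \alpha<d_f$. If $f(u)\neq f(v)$, condition 2 of Definition \ref{def-fclmc} would give $d_L(\mathcal{E}(u),\mathcal{E}(v))\ge d_f$, which is impossible. Hence $f(u)=f(v)$ for every edge.

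\textbf{Step 2 (propagate along paths).} Since $G_C^{\alpha}$ is connected, any two vertices $\mathcal{E}(u),\mathcal{E}(v)$ are joined by a path $\mathcal{E}(u)=\mathcal{E}(w_0),\mathcal{E}(w_1),\dots,\mathcal{E}(w_s)=\mathcal{E}(v)$. Applying Step 1 to each edge and inducting on $s$ gives $f(w_0)=f(w_1)=\cdots=f(w_s)$. Therefore $f$ is constant on all of $\mathbb{Z}_m^k$.

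\textbf{Step 3 (contradiction).} A constant $f$ has $|\mathrm{Im}(f)|=1$, which contradicts the hypothesis $|\mathrm{Im}(f)|\ge 2$. Hence $G_C^{\alpha}$ is disconnected for every $d\le\alpha<d_f$.

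The argument has no real obstacle; the only point needing care is the role of the range of $\alpha$. The lower bound $\alpha\ge d$ is not used in the contradiction. It only ensures the graph is non-trivial: for $\alpha<d$ the graph has no edges, so it is disconnected automatically because $|C|=m^k\ge 2$. The essential ingredient is the strict inequality $\alpha<d_f$, which is exactly what Step 1 needs.
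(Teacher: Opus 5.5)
Your proof is correct and is essentially the paper's argument. The paper notes that condition 2 forbids any edge of $G_C^{\alpha}$ (for $\alpha<d_f$) between the classes $x_C=\{c_u\in C: f(u)=x\}$ for distinct values $x$, so the graph is disconnected; your path-propagation argument makes explicit two steps the paper leaves implicit, namely why this forces disconnection and where $|\mathrm{Im}(f)|\geq 2$ (two nonempty classes) is used.
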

\begin{proof}
For all $x\in \mathrm{Im}(f)$, define $x_C=\{c_u\in C:\ f(u
)=x\}\subseteq C$, where $c_u$ is the encoded codeword corresponding to the message vector $u\in\mathbb{Z}_m^k$. Let $C$  is a strict $(f,d_d=d,d_f)_L$-FCLMC for some $d_f>d$. For  each $c_u\in x_C$ and each $c_v\in y_C$ with $x\neq y$, by definition of $(f,d_d=d,d_f)_L$-FCLMC, we have
\begin{align*}
    d_L(c_u,c_v)\geq d_f>d.
\end{align*}
Recall that in $G_C^{\alpha}$, two vertices are adjacent if and only if the Lee distance between them is at most $\alpha$.
Therefore, there is no edge in $G_C^{\alpha}$ for $\alpha<d_f$ between $x_C$ and $y_C$ for distinct $x$ and $y$. Thus, $G_C^{\alpha}$ is not connected for $\alpha<d_f$. 
\end{proof}

The above theorem can be generalized to the codes whose
$\alpha$-Lee distance graph has more than one connected component.

\begin{theorem}
Let $C$ be an $(n,m^k,d)$ code over $\mathbb{Z}_m^k$ in the Lee metric. If the $\alpha$-Lee distance graph $G_C^{\alpha}$ of $C$ has $\eta$ connected components, then $C$ cannot be a strict $(f,d,d_f)_L$-FCLMC  for any function $f:\mathbb{Z}_m^k\to \mathrm{Im}(f) $   with $|\mathrm{Im}(f)|\geq \eta+1$ and  $\alpha<d_f$. 
\end{theorem}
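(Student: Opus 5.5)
The plan is a pigeonhole argument on the connected components of $G_C^{\alpha}$, mirroring the proof of the preceding theorem. We argue by contradiction. Suppose $C$ is a strict $(f,d,d_f)_L$-FCLMC for some function $f$ with $|\mathrm{Im}(f)|\geq \eta+1$, where $\alpha<d_f$. As before, write $c_u$ for the codeword encoding $u\in\mathbb{Z}_m^k$, and for $x\in\mathrm{Im}(f)$ let $x_C=\{c_u\in C:\ f(u)=x\}$. The nonempty sets $x_C$ partition $C$, and there are at least $\eta+1$ of them.

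The key step is to show that every connected component $K$ of $G_C^{\alpha}$ lies inside a single class $x_C$. Take an edge $c_uc_v$ of $G_C^{\alpha}$. Then $d_L(c_u,c_v)\leq\alpha<d_f$. Condition 2 of the definition of an $(f,d_d,d_f)_L$-FCLMC states that $f(u)\neq f(v)$ forces $d_L(c_u,c_v)\geq d_f$. Hence $f(u)=f(v)$, so adjacent vertices carry the same function value. Following any path inside $K$, all vertices of $K$ have the same function value, so $K\subseteq x_C$ for a unique $x$.

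Define the map $\phi$ from the set of components to $\mathrm{Im}(f)$ by sending $K$ to the common function value on $K$. Since every codeword lies in some component, every nonempty class $x_C$ contains a component. In particular, each $x\in\mathrm{Im}(f)$ is hit, because $f^{-1}(x)\neq\emptyset$ and so $x_C\neq\emptyset$. Thus $\phi$ is surjective, giving
\begin{align*}
\eta\geq|\mathrm{Im}(f)|\geq\eta+1,
\end{align*}
a contradiction.

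The only potential obstacle is bookkeeping rather than mathematics: the definition of the $\alpha$-Lee distance graph uses $d_L(c_1,c_2)\leq\alpha$, and we need the strict inequality $\alpha<d_f$ to rule out cross-class edges. The statement gives $\alpha<d_f$ exactly, and no lower bound on $\alpha$ is needed for this direction. The hypothesis $d_d=d$ and strictness play no role beyond fixing the setting, and the argument works verbatim for any code satisfying condition 2.
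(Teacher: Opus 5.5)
Your proof is correct and follows the paper's approach. The paper gives no separate proof here; it presents the result as a generalization of the preceding theorem, whose proof is exactly your key step that condition 2 with $\alpha<d_f$ forbids any edge of $G_C^{\alpha}$ between distinct classes $x_C$ and $y_C$. Your surjection from the components onto $\mathrm{Im}(f)$ is the counting step that yields $|\mathrm{Im}(f)|\le\eta$; it implicitly uses that every vertex of $G_C^{\alpha}$ encodes some message, which holds because $C$ is the image of the encoding.
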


In the following theorem, we discuss the converse direction that gives an existence criterion for arbitrary codes to be strict FCLMCs.
\begin{theorem}
Let $C$ be an $(n,m^k,d)$  code. Let $f:\mathbb{Z}_m^k\to \mathrm{Im}(f)$ be a function with $\mathrm{Im}(f)=\{e_1,e_2,\dots,e_{\eta}\}$. For any integer $\alpha\geq d$, if connected components of $G_C^{\alpha}$ can be grouped into disjoint unions of $C_1,C_2,\dots, C_{\eta}$ with $|C_i|=|f^{-1}(e_i)|$, then $C$ is a strict $(f,d_d=d, d_f=\alpha+1)_L$-FCLMC.   
\end{theorem}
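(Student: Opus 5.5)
The plan is to build the encoding directly from the grouping of the components of $G_C^{\alpha}$, and then check the two distance conditions of Definition \ref{def-fclmc}. Write $C=\{c_1,\dots,c_{m^k}\}$. By hypothesis the vertex set $C$ is partitioned into $C_1,\dots,C_{\eta}$. Each $C_i$ is a union of whole connected components of $G_C^{\alpha}$, and $|C_i|=|f^{-1}(e_i)|$. Since $\sum_i |f^{-1}(e_i)|=m^k=|C|$, for each $i$ we may fix a bijection $\phi_i:f^{-1}(e_i)\to C_i$. We then define the encoding $\mathcal{E}(u)=\phi_i(u)$ whenever $f(u)=e_i$. The preimages $f^{-1}(e_i)$ partition $\mathbb{Z}_m^k$ and the $C_i$ partition $C$, so $\mathcal{E}$ is a bijection $\mathbb{Z}_m^k\to C$. (If a systematic form $(u,p(u))$ is required, we only regard $C$ as a set of $m^k$ codewords indexed by messages through this bijection, as is done in the preceding graph-theoretic theorems.)

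\textbf{Data protection.} For distinct $u,v$, the codewords $\mathcal{E}(u),\mathcal{E}(v)$ are distinct elements of $C$. Hence $d_L(\mathcal{E}(u),\mathcal{E}(v))\geq d_L(C)=d=d_d$.

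\textbf{Function protection.} Suppose $f(u)=e_i\neq e_j=f(v)$. Then $\mathcal{E}(u)\in C_i$ and $\mathcal{E}(v)\in C_j$ with $i\neq j$. Because $C_i$ and $C_j$ are unions of distinct connected components, $\mathcal{E}(u)$ and $\mathcal{E}(v)$ lie in different components of $G_C^{\alpha}$. In particular they are not adjacent, so by the definition of the $\alpha$-Lee distance graph, $d_L(\mathcal{E}(u),\mathcal{E}(v))>\alpha$. Since distances are integers, this gives $d_L(\mathcal{E}(u),\mathcal{E}(v))\geq \alpha+1=d_f$.

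\textbf{Strictness.} Finally, $d_f=\alpha+1>\alpha\geq d=d_d$, so the resulting code is a strict $(f,d,\alpha+1)_L$-FCLMC.

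The only delicate point is the function-protection step. It relies on the fact that non-adjacency of two vertices in different components yields distance at least $\alpha+1$. This holds trivially: vertices in different components cannot be adjacent, since adjacency would place them in the same component. The remaining work is bookkeeping about the bijection.
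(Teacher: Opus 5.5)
Your proposal is correct and follows essentially the same route as the paper: you fix bijections $f^{-1}(e_i)\to C_i$, get data protection from $d_L(C)=d$, and get function protection because codewords in different groups lie in different components of $G_C^{\alpha}$, hence are non-adjacent and at distance at least $\alpha+1$. You also make explicit two points the paper leaves implicit: that the encoding is injective, so distinct messages give distinct codewords, and that $d_f=\alpha+1>\alpha\geq d$ gives strictness.
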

\begin{proof}
Since  $|C_i|=|f^{-1}(e_i)|$ for all $1\leq i\leq \eta$, let $\psi_i:f^{-1}(e_i)\to C_i$ be a bijective map. For $u\in \mathbb{Z}_m^k$, define an encoding $\mathcal{E}(u)=\psi_i(u)$ whenever $f(u)=e_i$. Then, for distinct $u, v\in \mathbb{Z}_m^k$,
\begin{align*}
    d_L(\mathcal{E}(u),\mathcal{E}(v))=&d_L(\psi_i(u),\psi_j(v)) \text{ for some } 1\leq i,j\leq \eta\\
    \geq& d_d, 
\end{align*}
since $\psi_i(u),\psi_j(v)\in C$ and $d_L(C)=d_d$.
Furthermore, let $u,v\in \mathbb{Z}_m^k$ with $f(u)\neq f(v)$, that is, $f(u)=e_i$ and  $f(v)=e_j$ for some $i\neq j$. Then, $\mathcal{E}(u)=\psi_i(u)\in C_i$ and $\mathcal{E}(v)=\psi_j(v)\in C_j$ are in distinct connected components of $G^{\alpha}_C$. Thus,
\begin{align*}
   d_L(\mathcal{E}(u),\mathcal{E}(v))
    \geq& \alpha+1=d_f.   
\end{align*}
Hence, by Definition \ref{def-fclmc}, $C$ is a strict $(f,d_d=d, d_f=\alpha+1)_L$-FCLMC. 
\end{proof}

\section{Conclusion}
In this work, we proposed a systematic framework to study function-correcting codes in the Lee metric that simultaneously protects both function values and the corresponding message data. We established several bounds, including a Plotkin-like bound on the optimal redundancy of these codes, as well as explicit constructions. Using these general results, we derived explicit bounds for specific classes of functions, the Lee weight function, locally bounded functions in the Lee metric, and modular sum functions. Furthermore, we explored strict function-correcting Lee metric codes using the corresponding Lee distance graph.

This work lays a foundation for the study of function-correcting codes with data protection in the Lee metric as part of an ongoing research project. The research is currently in progress, with further efforts focused on establishing additional structural properties, deriving stronger bounds, and developing related constructions. A comprehensive treatment of these results will be incorporated in a subsequent version of this work.


\bibliographystyle{abbrv}
	\bibliography{ref}

\end{document}